\def\showauthornotes{0}
\def\showtableofcontents{0}
\def\showkeys{0}
\def\showdraftbox{0}
\def\showcolorlinks{1}
\def\usemicrotype{1}
\def\showfixme{0}
\newtheorem{theorem}{Theorem}[section]
\newtheorem*{theorem*}{Theorem}
\newtheorem{proposition}[theorem]{Proposition}
\newtheorem*{proposition*}{Proposition}
\newtheorem{lemma}[theorem]{Lemma}
\newtheorem*{lemma*}{Lemma}
\newtheorem*{conjecture*}{Conjecture}
\newtheorem*{fact*}{Fact}
\newtheorem*{hypothesis*}{Hypothesis}
\theoremstyle{definition}
\newtheorem{definition}[theorem]{Definition}
\newtheorem{algorithm}[theorem]{Algorithm}
\newtheorem{problem}[theorem]{Problem}
\theoremstyle{remark}
\newtheorem*{claim*}{Claim}
\newtheorem{remark}[theorem]{Remark}
\newtheorem*{remark*}{Remark}
\newtheorem{observation}[theorem]{Observation}
\newtheorem*{observation*}{Observation}
\newcommand{\savehyperref}[2]{\texorpdfstring{\hyperref[#1]{#2}}{#2}}
\newcommand{\Sref}[1]{\hyperref[#1]{\S\ref*{#1}}}
\newcommand{\Authornote}[2]{{\sffamily\small\color{red}{[#1: #2]}}}
\newcommand{\Authornotecolored}[3]{{\sffamily\small\color{#1}{[#2: #3]}}}
\newcommand{\Authorcomment}[2]{{\sffamily\small\color{gray}{[#1: #2]}}}
\newcommand{\Authorstartcomment}[1]{\sffamily\small\color{gray}[#1: }
\newcommand{\Authorfnote}[2]{\footnote{\color{red}{#1: #2}}}
\newcommand{\Authorfixme}[1]{\Authornote{#1}{\textbf{??}}}
\newcommand{\Authormarginmark}[1]{\marginpar{\textcolor{red}{\fbox{\Large #1:!}}}}
\newcommand{\Authornote}[2]{}
\newcommand{\Authornotecolored}[3]{}
\newcommand{\Authorcomment}[2]{}
\newcommand{\Authorstartcomment}[1]{}
\newcommand{\Authorfnote}[2]{}
\newcommand{\Authorfixme}[1]{}
\newcommand{\Authormarginmark}[1]{}
\newcommand{\Paren}[1]{\left(#1\right)}
\newcommand{\iprod}[1]{\langle#1\rangle}
\newcommand{\Iprod}[1]{\left\langle#1\right\rangle}
\newcommand{\Esymb}{\mathbb{E}}
\newcommand{\Psymb}{\mathbb{P}}
\DeclareMathOperator*{\E}{\Esymb}
\DeclareMathOperator*{\ProbOp}{\Psymb}
\renewcommand{\Pr}{\ProbOp}
\newcommand{\tensor}{\otimes}
\newcommand{\textparen}[1]{\text{(#1)}}
\newcommand{\because}[1]{\textparen{because #1}}
\renewcommand{\because}[1]{\textparen{because #1}}
\newcommand\bdot\bullet
\DeclareMathOperator{\poly}{poly}
\DeclareMathOperator{\Span}{span}
\DeclareMathOperator{\Proj}{proj}
\newcommand{\R}{\mathbb R}
\newcommand{\cD}{\mathcal D}
\newcommand{\cM}{\mathcal M}
\newcommand{\cP}{\mathcal P}
\newcommand{\cT}{\mathcal T}
\newcommand{\draftbox}{\begin{center}
  \fbox{%
    \begin{minipage}{2in}%
      \begin{center}%
%        \begin{Large}%
          \Large\textsc{Working Draft}\\%
%        \end{Large}\\
        Please do not distribute%
      \end{center}%
    \end{minipage}%
  }%
\end{center}
\vspace{0.2cm}}
\newcommand{\draftbox}{}
\let\epsilon=\varepsilon
\numberwithin{equation}{section}
\newcommand\MYcurrentlabel{xxx}
\newcommand{\MYstore}[2]{%
  \global\expandafter \def \csname MYMEMORY #1 \endcsname{#2}%
}
\newcommand{\MYload}[1]{%
  \csname MYMEMORY #1 \endcsname%
}
\newcommand{\MYnewlabel}[1]{%
  \renewcommand\MYcurrentlabel{#1}%
  \MYoldlabel{#1}%
}
\newcommand{\MYdummylabel}[1]{}
\newcommand{\torestate}[1]{%
  % overwrite label command
  \let\MYoldlabel\label%
  \let\label\MYnewlabel%
  #1%
  \MYstore{\MYcurrentlabel}{#1}%
  % restore old label command
  \let\label\MYoldlabel%
}
\newcommand{\restatetheorem}[1]{%
  % overwrite label command with dummy
  \let\MYoldlabel\label
  \let\label\MYdummylabel
  \begin{theorem*}[Restatement of \prettyref{#1}]
    \MYload{#1}
  \end{theorem*}
  \let\label\MYoldlabel
}
\newcommand{\restatelemma}[1]{%
  % overwrite label command with dummy
  \let\MYoldlabel\label
  \let\label\MYdummylabel
  \begin{lemma*}[Restatement of \prettyref{#1}]
    \MYload{#1}
  \end{lemma*}
  \let\label\MYoldlabel
}
\newcommand{\restateprop}[1]{%
  % overwrite label command with dummy
  \let\MYoldlabel\label
  \let\label\MYdummylabel
  \begin{proposition*}[Restatement of \prettyref{#1}]
    \MYload{#1}
  \end{proposition*}
  \let\label\MYoldlabel
}
\newcommand{\restatefact}[1]{%
  % overwrite label command with dummy
  \let\MYoldlabel\label
  \let\label\MYdummylabel
  \begin{fact*}[Restatement of \prettyref{#1}]
    \MYload{#1}
  \end{fact*}
  \let\label\MYoldlabel
}
\newcommand{\restate}[1]{%
  % overwrite label command with dummy
  \let\MYoldlabel\label
  \let\label\MYdummylabel
  \MYload{#1}
  \let\label\MYoldlabel
}
\newcommand{\addreferencesection}{
  \phantomsection
  \addcontentsline{toc}{section}{References}
}
\let\origparagraph\paragraph
\renewcommand{\paragraph}[1]{\origparagraph{#1.}}
\let\citet\cite
\theoremstyle{definition}
\DeclareUrlCommand\email{}
\newcommand{\PO}{\mathcal{P}_\Omega}
\newcommand{\pref}{\prettyref}
\newcommand{\Oc}{\overline{\Omega}}
\newcommand{\tO}{{\tilde O}}
\title{Symmetric Tensor Completion from Multilinear Entries and Learning Product Mixtures over the Hypercube}
\author{%
\normalsize
Tselil Schramm\thanks{UC Berkeley, \protect\email{tschramm@cs.berkeley.edu}.
Supported by an NSF Graduate Research Fellowship (NSF award no 1106400).}
\and
\normalsize
Ben Weitz\thanks{UC Berkeley, \protect\email{bsweitz@cs.berkeley.edu}. Supported by an NSF Graduate Research Fellowship (NSF award no DGE 1106400).}
}
\begin{document}

\maketitle

\draftbox

\thispagestyle{empty}

\begin{abstract}
   We give an algorithm for completing an order-$m$ symmetric low-rank tensor from its multilinear entries in time roughly proportional to the number of tensor entries.
We apply our tensor completion algorithm to the problem of learning mixtures of product distributions over the hypercube, obtaining new algorithmic results.
    If the centers of the product distribution are linearly independent, then we recover distributions with as many as $\Omega(n)$ centers in polynomial time and sample complexity.
    In the general case, we recover distributions with as many as $\tilde\Omega(n)$ centers in quasi-polynomial time, answering an open problem of Feldman et al. (SIAM J. Comp.) for the special case of distributions with incoherent bias vectors.

    Our main algorithmic tool is the iterated application of a low-rank matrix completion algorithm for matrices with adversarially missing entries.

\end{abstract}

\clearpage

% tableofcontents added for better navigability of the document
\ifnum\showtableofcontents=1
{
\tableofcontents
\thispagestyle{empty}
 }
\fi

\clearpage

\setcounter{page}{1}

\section{Introduction}

Suppose we are given sample access to a distribution over the hypercube $\{\pm 1\}^n$, where each sample $x$ is generated in the following manner: there are $k$ product distributions $\cD_1,\ldots, \cD_k$ over $\{\pm 1\}^n$ (the $k$ ``centers'' of the distribution), and $x$ is drawn from $\cD_i$ with probability $p_i$.
This distribution is called a product mixture over the hypercube.

Given such a distribution, our goal is to recover from samples the parameters of the individual product distributions.
That is, we would like to estimate the probability $p_i$ of drawing from each product distribution, and furthermore we would like to estimate the parameters of the product distribution itself.
This problem has been studied extensively and approached with a variety of strategies (see e.g. \cite{FM99,CRao08, FOS08}).

A canonical approach to problems of this type is to empirically estimate the moments of the distribution, from which it may be possible to calculate the distribution parameters using linear-algebraic tools (see e.g. \cite{AM05, MR06,FOS08,AGHkT14}, and many more).
For product distributions over the hypercube, this technique runs into the problem that the square moments are always $1$, and so they provide no information.

The seminal work of Feldman, O'Donnell and Servedio \cite{FOS08} introduces an approach to this problem which compensates for the missing higher-order moment information using matrix completion.
Via a restricted brute-force search, Feldman et al. check all possible square moments, resulting in an algorithm that is triply-exponential in the number of distribution centers.
Continuing this line work, by giving an alternative to the brute-force search Jain and Oh \cite{JO13} recently obtained a polynomial-time algorithm for a restricted class of product mixtures.
In this paper we extend these ideas, giving a polynomial-time algorithm for a wider class of product mixtures, and a quasi-polynomial time algorithm for an even broader class of product mixtures (including product mixtures with centers which are not linearly independent).

Our main tool is a matrix-completion-based algorithm for completing tensors of order $m$ from their multilinear moments in time $\tilde{O}(n^{m+1})$, which we believe may be of independent interest.
There has been ample work in the area of noisy tensor decomposition (and completion), see e.g. \cite{JO14,BKS15,TS15,BM15}.
However, these works usually assume that the tensor is obscured by random noise, while in our setting the ``noise'' is the absence of all non-multilinear entries.
An exception to this is the work of \cite{BKS15}, where to obtain a quasi-polynomial algorithm it suffices to have the injective tensor norm of the noise be bounded via a Sum-of-Squares proof.\footnote{It may be possible that this condition is met for some symmetric tensors when only multilinear entries are known, but we do not know an SOS proof of this fact.}
To our knowledge, our algorithm is the only $n^{O(m)}$-time algorithm that solves the problem of completing a symmetric tensor when only multilinear entries are known.

\subsection{Our Results}

Our main result is an algorithm for learning a large subclass of product mixtures with up to even $\Omega(n)$ centers in polynomial (or quasi-polynomial) time.
The subclass of distributions on which our algorithm succeeds is described by characteristics of the subspace spanned by the bias vectors.
Specifically, the rank and \emph{incoherence} of the span of the bias vectors cannot simultaneously be too large.
Intuitively, the incoherence of a subspace measures how close the subspace is to a coordinate subspace of $\R^n$.
We give a formal definition of incoherence later, in \prettyref{def:incoherence}.

More formally, we prove the following theorem:

\begin{theorem}\label{thm:main_learn_prod}
    Let $\cD$ be a mixture over $k$ product distributions on $\{\pm 1\}^n$, with bias vectors $v_1,\ldots,v_k \in \R^n$ and mixing weights $w_1,\ldots,w_k>0$.
    Let $\Span\{v_i\}$ have dimension $r$ and incoherence $\mu$.
    Suppose we are given as input the moments of $\cD$.
    \begin{enumerate}
    \item If $v_1,\dots,v_k$ are linearly independent, then as long as $4 \cdot \mu \cdot r < n$, there is a $\poly(n,k)$ algorithm that recovers the parameters of $\cD$.
    \item Otherwise, if $\left|\iprod{ v_i, v_j} \right| < \|v_i\|\cdot\|v_j\|\cdot (1-\eta)$ for every $i\neq j$ and $\eta > 0$, then as long as $4\cdot\mu \cdot r\cdot \log k/\log\tfrac{1}{1-\eta} < n$, there is an $n^{O(\log k/\log \frac{1}{1-\eta})}$ time algorithm that recovers the parameters of $\cD$.
    \end{enumerate}
\end{theorem}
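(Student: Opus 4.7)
The plan is to reduce learning $\cD$ to symmetric tensor completion plus tensor decomposition. First I would estimate the multilinear entries of the $m$-th moment tensor $T_m = \sum_{j=1}^k w_j\, v_j^{\otimes m}$ from samples: these are the only moment entries that carry information, because for $x \in \{\pm 1\}^n$ any entry containing a repeated index collapses to a lower-order moment via $x_i^2 = 1$. The empirical multilinear moments thus give noisy access to the multilinear entries of the rank-$k$ symmetric tensor $T_m$, and the problem reduces to: (a) completing $T_m$ from its multilinear entries, and (b) decomposing $T_m$ to read off $\{(w_j, v_j)\}$.

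For step (a) I would invoke the paper's symmetric tensor completion algorithm, which runs in $\tO(n^{m+1})$ time and succeeds provided the span of the factor vectors is sufficiently incoherent. The hypotheses in the theorem, $4\mu r < n$ in case (1) and $4\mu r \log k / \log \tfrac{1}{1-\eta} < n$ in case (2), are exactly calibrated so that the iterated low-rank matrix completion subroutine can recover the missing non-multilinear entries for the order $m$ needed in each case.

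For step (b) I would apply standard tensor decomposition to the recovered $T_m$. In case (1), taking $m = 3$ already suffices: since $v_1,\ldots,v_k$ are linearly independent, Jennrich's algorithm on two random contractions of $T_3$ recovers $\{v_j\}$ in polynomial time, and the weights $w_j$ follow by solving a linear system against the first moment. In case (2), the $v_j$ may be dependent, but the pairwise bound $|\iprod{v_i,v_j}| < (1-\eta)\|v_i\|\|v_j\|$ implies
\[
    |\iprod{v_i^{\otimes m/2}, v_j^{\otimes m/2}}| \;\le\; (1-\eta)^{m/2}\|v_i\|^{m/2}\|v_j\|^{m/2},
\]
so the choice $m = \Theta(\log k / \log \tfrac{1}{1-\eta})$ makes the vectors $v_j^{\otimes m/2} \in \R^{n^{m/2}}$ linearly independent with a non-trivial Gram-matrix condition number. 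Flattening $T_m$ into an $n^{m/2} \times n^{m/2}$ matrix of rank $k$ and running Jennrich then recovers the $v_j^{\otimes m/2}$, from which each $v_j$ is obtained by a best rank-one approximation of the reshaped vector; the running time is dominated by manipulating $n^m$-sized objects, matching the theorem.

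The main obstacle I expect is not the decomposition step but the interface between (a) and (b): controlling the propagation of empirical sampling error and tensor-completion error so that the reconstructed $T_m$ is close enough to the true moment tensor for Jennrich's algorithm to remain stable. The required perturbation tolerance is governed by the least singular value of the Gram matrix of $\{v_j^{\otimes m/2}\}$, which the incoherence/separation hypotheses control, but verifying that the completion error and sampling error are both small relative to this quantity — and, in case (2), that the polylogarithmic blowup in $m$ does not swamp the incoherence bound — is where the technical work lies.
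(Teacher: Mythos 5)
Your high-level plan — estimate multilinear moments, invoke the paper's symmetric tensor completion to recover the low-rank moment tensor, then decompose — is the right one, and your handling of case~(1) via Jennrich on the completed $T_3$ is a perfectly good alternative to the paper's whitening + tensor-power-iteration route (the paper uses the \cite{AGHkT14} procedure, which also requires completing $T_2$; Jennrich needs only $T_3$, though one then gives up the robustness guarantees that \cite{AGHkT14} provides and which the paper uses for its finite-sample analysis). Your observation that the pairwise-separation hypothesis gives linear independence of $v_j^{\otimes m}$ at $m = \Theta(\log k/\log\tfrac{1}{1-\eta})$ is also exactly the paper's Lemma on tensor powers.

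However, case~(2) as written has a concrete gap: you propose to ``flatten $T_m$ into an $n^{m/2}\times n^{m/2}$ matrix of rank $k$ and run Jennrich.'' Jennrich's algorithm is a \emph{three}-tensor decomposition method; it has no meaning on a matrix, and a rank-$k$ PSD matrix $\sum_j w_j (v_j^{\otimes m/2})(v_j^{\otimes m/2})^\top$ with non-orthogonal factors has infinitely many rank-one decompositions, so matrix-level information alone cannot identify the $v_j^{\otimes m/2}$. To make your plan go through you need a three-mode object over $\R^{n^{m'}}$: complete the order-$3m'$ moment tensor $T_{3m'}$ and reshape it into an $n^{m'}\times n^{m'}\times n^{m'}$ tensor $\sum_j w_j (v_j^{\otimes m'})^{\otimes 3}$, and then Jennrich (or any simultaneous-diagonalization method) applies once $v_1^{\otimes m'},\ldots,v_k^{\otimes m'}$ are linearly independent. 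This is exactly what the paper does — it completes both $\cM_{2m}$ and $\cM_{3m}$, flattens to a matrix and a 3-tensor over $\R^{n^m}$, and then runs the whitening/TPI routine. A second, smaller issue: recovering $v_j$ from $v_j^{\otimes m}$ by best rank-one approximation of a reshaping loses the sign (and the scale, absent extra bookkeeping) when $m$ is even; the paper sidesteps this by taking $m$ odd and reading off $v_j(i)$ as the $m$-th root of the $(i,\ldots,i)$ entry of $v_j^{\otimes m}$.
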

\begin{remark}
In the case that $v_1,\ldots,v_k$ are not linearly independent, the runtime depends on the separation between the vectors.
We remark however that if we have some $v_i = v_j$ for $i\neq j$, then the distribution is equivalently representable with fewer centers by taking the center $v_i$ with mixing weight $w_i + w_j$.
If there is some $v_i = - v_j$, then our algorithm can be modified to work in that case as well, again by considering $v_i$ and $v_j$ as one center--we detail this in \pref{sec:pdist}.
\end{remark}

    In the main body of the paper we assume access to exact moments; in \pref{app:error} we prove \pref{thm:learn-big}, a version of \pref{thm:main_learn_prod} which accounts for sampling error.

    The foundation of our algorithm for learning product mixtures is an algorithm for completing a low-rank incoherent tensor of arbitrary order given access only to its multilinear entries:

\begin{theorem}
    \label{thm:tensor-complete-alg-main}
    Let $T$ be a symmetric tensor of order $m$, so that $T = \sum_{i\in [k]}w_i \cdot v_i^{\tensor m}$ for some vectors $v_1,\ldots,v_k \in \R^n$ and scalars $w_1,\ldots, w_k \neq 0$.
    Let $\Span\{v_i\}$ have incoherence $\mu$ and dimension $r$.
    Given perfect access to all multilinear entries of $T$, if $4\cdot\mu\cdot r\cdot m/n < 1$, then there is an algorithm which returns the full tensor $T$ in time $\tilde O(n^{m+1})$.
\end{theorem}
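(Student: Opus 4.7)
The plan is to reduce the tensor completion task to an iterated sequence of low-rank matrix completions on $n\times n$ slices of $T$. For each ``tail'' $\alpha = (i_3,\ldots,i_m) \in [n]^{m-2}$, define the slice
\[
A^{\alpha}_{i_1,i_2} \;:=\; T_{i_1,i_2,i_3,\ldots,i_m} \;=\; \sum_{j=1}^{k} w_j \Paren{\prod_{l=3}^{m}(v_j)_{i_l}}\, v_j v_j^{\top}\mper
\]
Since every $v_j$ lies in the $r$-dimensional $\mu$-incoherent subspace $V = \Span\{v_i\}$, each $A^{\alpha}$ has rank at most $r$ and inherits the $\mu$-incoherence of $V$. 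This places every slice within reach of a standard matrix-completion algorithm for adversarially missing entries, which succeeds whenever the fraction of missing entries per row and per column is at most a small constant times $1/(\mu r)$.

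The algorithm will run in phases indexed by the number of coincident index pairs $\ell = 0,1,2,\ldots$. I would maintain the invariant that after phase $\ell$ we know all entries $T_{i_1,\ldots,i_m}$ whose multi-index has at most $\ell$ pairs of equal coordinates. The base case $\ell = 0$ is exactly the multilinear input. In phase $\ell+1$, loop over all tails $\alpha \in [n]^{m-2}$ and re-complete the slice $A^{\alpha}$ using everything presently known; symmetry of $T$ then lets us interpret each newly recovered entry as a valid value at every permutation of its multi-index. Crucially, within each slice $A^{\alpha}$, the missing entries involve either the diagonal $i_1=i_2$ or coincidences of $i_1,i_2$ with the at most $m-2$ coordinates of $\alpha$; so every row and column of $A^{\alpha}$ is missing at most $O(m)$ of its $n$ entries, a missing fraction of $O(m/n)$. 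The hypothesis $4\mu r m/n < 1$ exactly ensures this is within the matrix-completion tolerance.

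Since every multi-index has at most $\binom{m}{2}$ coincident pairs, $O(m^2)$ phases suffice. Each phase runs $n^{m-2}$ matrix completions on rank-$r$, $n\times n$ matrices, each costing $\tilde O(n^2 r)$ time. With $r \le n$, the total running time is $\tilde O(n^{m+1})$ as claimed.

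The main technical obstacle I anticipate is verifying that the rank and incoherence guarantees propagate cleanly across the phases: we must check that restricting the subspace $V$ to a coordinate subset of size $n - O(m)$ preserves both the rank-$r$ structure and the $O(\mu)$-incoherence of $V$, and that the inductive use of entries recovered in earlier phases keeps the per-row/per-column missing fraction of each slice bounded by $O(m/n)$ at every phase---so that the black-box matrix-completion routine can be applied unchanged throughout. The remaining calculations (bookkeeping of the collision patterns, absorbing sub-polynomial factors into $\tilde O$, and tracking constants in the matrix-completion hypothesis) are routine.
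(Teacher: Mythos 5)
Your high-level strategy — slice the tensor, apply adversarial matrix completion to each $n\times n$ slice, iterate using newly learned entries — matches the paper, but the central quantitative claim that drives your induction is false, and the place where it fails is exactly where the paper has to do extra work.

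You assert that ``every row and column of $A^{\alpha}$ is missing at most $O(m)$ of its $n$ entries.'' This is not true when a row index $i_1$ lies in the support of $\alpha$: in that case \emph{every} entry $(i_1,i_2,\alpha)$ has a coincidence (the repeat of $i_1$), so the entire row is non-multilinear. In your phase $1$, for a multilinear tail $\alpha$, the $m-2$ rows and columns indexed by elements of $\alpha$ are missing essentially all $n$ of their entries, not $O(m)$ of them. The matrix completion theorem you invoke requires a bound of $\kappa$ per column and $\rho$ per row with $2(\kappa\mu/m + \rho\mu/n)r < 1$; with $\kappa=\rho\approx n$ this fails, so the black box cannot be applied to $A^{\alpha}$ directly, no matter how good $\mu$ and $r$ are. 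The paper handles this in its base case by explicitly \emph{deleting} the rows and columns indexed by elements of $\alpha$ before completing, which leaves only the diagonal missing; the entries in the deleted rows/columns are then recovered, by symmetry, from other slices.

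A second, subtler issue is your choice of induction order. You order by ``number of coincident pairs,'' whereas the paper orders by ``histogram length'' (number of distinct symbols), and these are not monotone in one another (e.g., for $m=6$, histogram $(2,2,2)$ has $3$ coincident pairs and length $3$, while $(3,1,1,1)$ has $3$ coincident pairs and length $4$). The paper's histogram-length ordering, together with a careful choice of which pair $\{x_i,x_i\}$ or $\{x_i,x_j\}$ to peel off to form the tail $Y$, is engineered precisely so that in the inductive step the missing entries of $T(Y,\cdot,\cdot)$ form an $\ell\times\ell$ \emph{principal} submatrix (at most $m$ per row/column), and the entry being completed lives inside it. Your ordering does not automatically yield this: for a tail $Y$ with $c_Y$ coincident pairs, a row $a\in\supp(Y)$ with $a$ appearing $h_a$ times has \emph{all} off-support entries at $c_Y+h_a$ coincident pairs, which can exceed $\ell$ and leave the whole row unknown at the moment you try to use it. One can sometimes escape to a different slice by permuting, as you suggest, but that requires proving a nontrivial combinatorial lemma that you have not supplied. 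You flag the ``restriction to a coordinate subset of size $n-O(m)$'' as something to verify, but this is not a verification step — it is a structural modification to the algorithm that must be built in, and only where the entry you want survives the restriction, which in turn forces the ordering and tail choice the paper uses. As written, your argument does not close the induction.
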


\subsection{Prior Work}
We now discuss in more detail prior work on learning product mixtures over the hypercube, and contextualize our work in terms of previous results.

The pioneering papers on this question gave algorithms for a very restricted setting: the works of \cite{FM99} and \cite{C99,CGG01} introduced the problem and gave algorithms for learning a mixture of exactly two product distributions over the hypercube.

The first general result is the work of Feldman, O'Donnell and Servedio, who give an algorithm for learning a mixture over $k$ product distributions in $n$ dimensions in time $n^{O(k^3)}$ with sample complexity $n^{O(k)}$.
Their algorithm relies on brute-force search to enumerate all possible product mixtures that are consistent with the observed second moments of the distribution.
After this, they use samples to select the hypothesis with the Maximum Likelihood.
Their paper leaves as an open question the more efficient learning of discrete mixtures of product distributions, with a smaller exponential dependence (or even a quasipolynomial dependence) on the number of centers.\footnote{
    We do not expect better than quasipolynomial dependence on the number of centers, as learning the parity distribution on $t$ bits is conjectured to require at least $n^{\Omega(t)}$ time, and this distribution can be realized as a product mixture over $2^{t-1}$ centers.
}

More recently, Jain and Oh \cite{JO13} extended this approach: rather than generate a large number of hypotheses and pick one, they use a tensor power iteration method of \cite{AGHkT14} to find the right decomposition of the second- and third-order moment tensors. To learn these moment tensors in the first place, they use alternating minimization to complete the (block)-diagonal of the second moments matrix, and they compute a least-squares estimation of the third-order moment tensor.
Using these techniques, Jain and Oh were able to obtain a significant improvement for a restricted class of product mixtures, obtaining a polynomial time algorithm for linearly independent mixtures over at most $k = O(n^{2/7})$ centers.
In order to ensure the convergence of their matrix (and tensor) completion subroutine, they introduce constraints on the span of the bias vectors of the distribution
(see \pref{sec:pinc} for a discussion of incoherence assumptions on product mixtures).
Specifically, letting $r$ the rank of the span, letting $\mu$ be the incoherence of the span, and letting $n$ be the dimension of the samples, they require that $\tilde\Omega(\mu^5 r^{7/2}) \le n$.\footnote{
    The conditions are actually more complicated, depending on the condition number of the second-moment matrix of the distribution.
    For precise conditions, see \cite{JO13}.
}
Furthermore, in order to extract the bias vectors from the moment information, they require that the bias vectors be linearly independent.
When these conditions are met by the product mixture, Jain and Oh learn the mixture in polynomial time.

In this paper, we improve upon this result, and can handle as many as $\Omega(n)$ centers in some parameter settings.
Similarly to \cite{JO13}, we use as a subroutine an algorithm for completing low-rank matrices with adversarially missing entries.
However, unlike \cite{JO13}, we use an algorithm with more general guarantees, the algorithm of \cite{HKZ11}.\footnote{
    A previous version of this paper included an analysis of a matrix completion algorithm almost identical to that of \cite{HKZ11}, and claimed to be the first adversarial matrix completion result of this generality. Thanks to the comments of an anonymous reviewer, we were notified of our mistake.
}
These stronger guarantees allow us to devise an algorithm for completing low-rank higher-order tensors from their multilinear entries, and this algorithm
allows us to obtain a polynomial time algorithm for a more general class of linearly independent mixtures of product distributions than \cite{JO13}.

Furthermore, because of the more general nature of this matrix completion algorithm, we can give a new algorithm for completing low-rank tensors of arbitrary order given access only to the multilinear entries of the tensor.
Leveraging our multilinear tensor completion algorithm, we can reduce the case of linearly dependent bias vectors to the linearly independent case by going to higher-dimensional tensors.
This allows us to give a quasipolynomial algorithm for the general case, in which the centers may be linearly dependent.
To our knowledge, \pref{thm:main_learn_prod} is the first quasi-polynomial algorithm that learns product mixtures whose centers are not linearly independent.

\begin{figure}[t]
  \begin{center}
      {\bf Learning Product Mixtures with $k$ Centers over $\{\pm 1\}^n$}
      \renewcommand{\arraystretch}{1.4}
  \begin{tabular}{|c|c|c|c|c|c|}
    \hline
    Reference & Runtime & Samples & Largest $k$ & Dep. Centers?& Incoherence? \\ \hline \hline
    Feldman et al. \cite{FOS08} & $n^{O(k^3)}$ & $n^{O(k)}$ & $n$ & Allowed & Not Required \\ \hline
    Jain \& Oh \cite{JO14} & poly$(n,k)$  & poly$(n,k)$ & $k \le O(n^{2/7})$ &  Not Allowed & Required\\ \hline
     Our Results {\Large $\substack{\text{lin. indep.}\\\text{lin. dep.}}$}& ${\Large \substack{ \text{poly}(n,k),\\ n^{\tO(\log k)}}}$ & ${\Large \substack{ \text{poly}(n,k),\\ n^{\tO(\log k)}}}$ & $k \le O(n)$  & Allowed & Required \\ \hline
  \end{tabular}
  \caption{Comparison of our work to previous results.
      We compare runtime, sample complexity, and restrictions on the centers of the distribution: the maximum number of centers, whether linearly dependent centers are allowed, and whether the centers are required to be incoherent.
  The two subrows correspond to the cases of linearly independent and linearly dependent centers, for which we guarantee different sample complexity and runtime.}
\label{fig:comparison-fps}
  \end{center}
  \vspace{-5mm}
  \end{figure}

\paragraph{Restrictions on Input Distributions}
We detail our restrictions on the input distribution.
In the linearly independent case, if there are $k$ bias vector and $\mu$ is the incoherence of their span, and $n$ is the dimension of the samples, then we learn a product mixture in time $n^{3}$ so long as $4 \mu r < n$.
Compare this to the restriction that $\tilde\Omega(r^{7/2}\mu^5) < n$, which is the restriction of Jain and Oh--we are able to handle even a linear number of centers so long as the incoherence is not too large, while Jain and Oh can handle at most $O(n^{2/7})$ centers.
If the $k$ bias vectors are not independent, but their span has rank $r$ and  if they have maximum pairwise inner product $1-\eta$ (when scaled to unit vectors), then we learn the product mixture in time $n^{O(\log k\cdot(- \log 1 - \eta)) }$ so long as $4 \mu r \log k \cdot \log \tfrac{1}{1-\eta} < n$ (we also require a quasipolynomial number of samples in this case).

While the quasipolynomial runtime for linearly dependent vectors may not seem particularly glamorous,
we stress that the runtime depends on the separation between the vectors.
To illustrate the additional power of our result, we note that a choice of random $v_1,\ldots,v_k$ in an $r$-dimensional subspace meet this condition extremely well, as we have $\eta = 1 - \tilde O(1/\sqrt{r})$ with high probability--for, say, $k = 2r$, the algorithm of \cite{JO13} would fail in this case, since $v_1,\ldots,v_k$ are not linearly independent, but our algorithm succeeds in time $n^{O(1)}$.

This quasipolynomial time algorithm resolves an open problem of \cite{FOS08}, when restricted to distributions whose bias vectors satisfy our condition on their rank and incoherence.
We do not solve the problem in full generality, for example our algorithm fails to work when the distribution can have multiple decompositions into few centers. In such situations, the centers do not span an incoherent subspace, and thus the completion algorithms we apply fail to work. In general, the completion algorithms fail whenever the moment tensors admit many different low-rank decompositions (which can happen even when the decomposition into centers is unique, for example parity on three bits). In this case, the best algorithm we know of is the restricted brute force of Feldman, O'Donnell and Servedio.

\paragraph{Sample Complexity}

One note about sample complexity--in the linearly dependent case, we require a quasipolynomial number of samples to learn our product mixture.
That is, if there are $k$ product centers, we require $n^{\tO(\log k)}$ samples, where the tilde hides a dependence on the separation between the centers.
In contrast, Feldman, O'Donnell, and Servedio require $n^{O(k)}$ samples.
This dependence on $k$ in the sample complexity is not explicitly given in their paper, as for their algorithm to be practical they consider only constant $k$.

\paragraph{Parameter Recovery Using Tensor Decomposition}
The strategy of employing the spectral decomposition of a tensor in order to learn the parameters of an algorithm is not new, and has indeed been employed successfully in a number of settings.
In addition to the papers already mentioned which use this approach for learning product mixtures (\cite{JO14} and in some sense \cite{FOS08}, though the latter uses matrices rather than tensors),
the works of \cite{MR06,AHK12,HK13,AGHK14,BCMV14}, and many more also use this idea.
In our paper, we extend this strategy to learn a more general class of product distributions over the hypercube than could previously be tractably learned.

\subsection{Organization}
The remainder of our paper is organized as follows.
In \pref{sec:prelims}, we give definitions and background, then outline our approach to learning product mixtures over the hypercube, as well as put forth a short discussion on what kinds of restrictions we place on the bias vectors of the distribution.
In \pref{sec:tensor}, we give an algorithm for completing symmetric tensors given access only to their multilinear entries, using adversarial matrix completion as an algorithmic primitive.
In \pref{sec:pdist}, we apply our tensor completion result to learn mixtures of product distributions over the hypercube, assuming access to the precise second- and third-order moments of the distribution.
\pref{app:noisy-nnm} and \pref{app:error} contain discussions of matrix completion and learning product mixtures in the presence of sampling error, and \pref{app:whitening} contains further details about the algorithmic primitives used in learning product mixtures.

\subsection{Notation}
We use $e_i$ to denote the $i$th standard basis vector.

For a tensor $T\in\R^{n\times n\times n}$, we use $T(a,b,c)$ to denote the entry of the tensor indexed by $a,b,c\in[n]$, and we use $T(i,\cdot,\cdot)$ to denote the $i$th slice of the tensor, or the subset of entries in which the first coordinate is fixed to $i \in [n]$.
For an order-$m$ tensor $T \in \R^{n^m}$, we use $T(X)$ to represent the entry indexed by the string $X \in [n]^m$, and we use $T(Y,\cdot,\cdot)$ to denote the slice of $T$ indexed by the string $Y \in [n]^{m-2}$.
For a vector $v \in \R^n$, we use the shorthand $x^{\tensor k}$ to denote the $k$-tensor $x \tensor x \cdots \tensor x \in \R^{n \times \cdots \times n}$.

We use $\Omega \subseteq [m] \times [n]$ for the set of observed entries of the hidden matrix $M$, and $\PO$ denotes the projection onto those coordinates.

\vspace{-0.3cm}
\section{Preliminaries}
\label{sec:prelims}

In this section we present background necessary to prove our results, as well as provide a short discussion on the meaning behind the restrictions we place on the distributions we can learn. We start by defining our main problem.

\subsection{Learning Product Mixtures over the Hypercube}

A distribution $D$ over $\{\pm 1\}^n$ is called a \emph{product distribution} if every bit in a sample $x \sim D$ is independently chosen.
Let $D_1,\ldots,D_k$ be a set of product distributions over $\{\pm 1\}^n$.
Associate with each $D_i$ a vector $v_i \in[-1,1]^n$ whose $j$th entry encodes the bias of the $j$th coordinate, that is
\[
    \Pr_{x\sim\cD_i}[ x(j) = 1] = \frac{1 + v_i(j)}{2}.
\]
Define the distribution $\cD$ to be a convex combination of these product distributions, sampling
$
x \sim \cD = \{
   x \sim D_i \quad \text{with probability } w_i \}$,
where $w_i > 0$ and $\sum_{i\in[k]} w_i = 1$.
The distributions $D_1,\ldots,D_k$ are said to be the \emph{centers} of $\cD$, the vectors $v_1,\ldots,v_k$ are said to be the \emph{bias vectors}, and $w_1,\ldots,w_k$ are said to be the \emph{mixing weights} of the distribution.

\begin{problem}[Learning a Product Mixture over the Hypercube]
    Given independent samples from a distribution $\cD$ which is a mixture over $k$ centers with bias vectors $v_1,\ldots, v_k \in [-1,1]^n$ and mixing weights $w_1,\ldots,w_k > 0$, recover $v_1,\ldots,v_k$ and $w_1,\ldots,w_k$.
\end{problem}

This framework encodes many subproblems, including learning parities, a notorious problem in learning theory; the best current algorithm requires time $n^{\Omega(k)}$, and the noisy version of this problem is a standard cryptographic primitive \cite{MOS04,Feld07,Reg09, Val15}.
We do not expect to be able to learn an \emph{arbitrary} mixture over product distribution efficiently.
We obtain a polynomial-time algorithm when the bias vectors are linearly independent, and a quasi-polynomial time algorithm in the general case, though we do require an \emph{incoherence} assumption on the bias vectors (which parities do not meet), see \pref{def:incoherence}.

In \cite{FOS08}, the authors give an $n^{O(k)}$-time algorithm for the problem based on the following idea.
With great accuracy in polynomial time we may compute the pairwise moments of $\cD$,
\[
    M = \E_{x\sim \cD} [xx^T] = E_2 + \sum_{i\in[k]} w_i\cdot v_i v_i^T.
\]
The matrix $E_2$ is a diagonal matrix which corrects for the fact that $M_{jj} = 1$ always.
If we were able to learn $E_2$ and thus access $\sum_{i \in[k]} w_{i} v_i v_i^T$, the ``augmented second moment matrix,'' we may hope to use spectral information to learn $v_1,\ldots,v_k$.

The algorithm of \cite{FOS08} performs a brute-force search to learn $E_2$, leading to a runtime exponential in the rank.
By making additional assumptions on the input $\cD$ and computing higher-order moments as well, we avoid this brute force search and give a polynomial-time algorithm for product distributions with linearly independent centers:
If the bias vectors are linearly independent, a power iteration algorithm of \cite{AGHkT14} allows us to learn $\cD$ given access to both the augmented second- and third-order moments.\footnote{
    There are actually several algorithms in this space; we use the tensor-power iteration of \cite{AGHkT14} specifically.
    There is a rich body of work on tensor decomposition methods, based on simultaneous diagonalization and similar techniques (see e.g. Jenrich's algorithm \cite{H70} and \cite{LCC07}).
}
Again, sampling the third-order moments only gives access to $\E_{x\sim \cD} [x^{\otimes 3}] = E_3 + \sum_{i\in [k]} w_i \cdot v_i^{\otimes 3}$, where $E_3$ is a tensor which is nonzero only on entries of multiplicity at least two. To learn $E_2$ and $E_3$, Jain and Oh used alternating minimization and a least-squares approximation. For our improvement, we develop a tensor completion algorithm based on recursively applying the adversarial matrix completion algorithm of Hsu, Kakade and Zhang \cite{HKZ11}. In order to apply these completion algorithms, we require an \emph{incoherence} assumption on the bias vectors (which we define in the next section).

In the general case, when the bias vectors are not linearly independent, we exploit the fact that high-enough tensor powers of the bias vectors are independent, and we work with the $\tilde{O}(\log k)$th moments of $\cD$, applying our tensor completion to learn the full moment tensor, and then using \cite{AGHkT14} to find the tensor powers of the bias vectors, from which we can easily recover the vectors themselves. (the tilde hides a dependence on the separation between the bias vectors). Thus if the distribution is assumed to come from bias vectors that are \emph{incoherent} and \emph{separated}, then we can obtain a significant runtime improvement over \cite{FOS08}.

\subsection{Matrix Completion and Incoherence}
As discussed above, the matrix (and tensor) completion problem arises naturally in learning product mixtures as a way to compute the augmented moment tensors.
\begin{problem}[Matrix Completion]
    Given a set $\Omega \subseteq [m] \times [n]$ of observed entries of a hidden rank-$r$ matrix $M$, the \emph{Matrix Completion Problem} is to successfully recover the matrix $M$ given only $\PO(M)$.
\end{problem}
However, this problem is not always well-posed.
For example, consider the input matrix $M = e_1e_1^T + e_n e_n^T$.
$M$ is rank-$2$, and has only $2$ nonzero entries on the diagonal, and zeros elsewhere.
Even if we observe almost the entire matrix (and even if the observed indices are random), it is likely that every entry we see will be zero, and so we cannot hope to recover $M$.
Because of this, it is standard to ask for the input matrix to be \emph{incoherent}:
\begin{definition}
    \label{def:incoherence}
Let $U \subset \mathbb{R}^n$ be a subspace of dimension $r$.
We say that $U$ is incoherent with parameter $\mu$ if
$\max_{i \in [n]}
\|\Proj_U(e_i)\|^2
\leq
\mu
\frac{r}{n}$.
If $M$ is a matrix with left and right singular spaces $U$ and $V$, we say that $M$ is $(\mu_U,\mu_V)$-incoherent if $U$ (resp. $V$) is incoherent with parameter $\mu_U$ (resp $\mu_V$). We say that $v_1,\dots,v_k$ are incoherent with parameter $\mu$ if their span is incoherent with parameter $\mu$.
\end{definition}
Incoherence means that the singular vectors are well-spread over their coordinates.
Intuitively, this asks that every revealed entry actually gives information about the matrix.
For a discussion on what kinds of matrices are incoherent, see e.g. \cite{CR09}.
\
Once the underlying matrix is assumed to be incoherent, there are a number of possible algorithms one can apply to try and learn the remaining entries of $M$.
Much of the prior work on matrix completion has been focused on achieving recovery when the revealed entries are randomly distributed, and the goal is to minimize the number of samples needed (see e.g. \cite{CR09, R09, Getal13,H14}).
For our application, the revealed entries are not randomly distributed, but we have access to almost all of the entries ($\Omega(n^2)$ entries as opposed to the $\Omega(nr\log n)$ entries needed in the random case). Thus we use a particular kind of matrix completion theorem we call ``adversarial matrix completion,'' which can be achieved directly from the work of Hsu, Kakade and Zhang \cite{HKZ11}:
\begin{theorem}\label{thm:main_matrix_completion}
Let $M$ be an $m \times n$ rank-$r$ matrix which is $(\mu_U,\mu_V)$-incoherent, and let $\Oc \subset [m] \times [n]$ be the set of hidden indices.
If there are at most $\kappa$ elements per column and $\rho$ elements per row of $\Oc$, and if $2(\kappa\frac{\mu_U}{m} + \rho\frac{\mu_V}{n})r < 1$, then there is an algorithm that recovers $M$.
\end{theorem}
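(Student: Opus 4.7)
The plan is to recover $M$ via nuclear-norm minimization: solve $\min \|X\|_*$ subject to $\PO(X) = \PO(M)$ and argue that $M$ itself is the unique optimum. Writing the reduced SVD as $M = U\Sigma V^T$, let $P_U = UU^T$, $P_V = VV^T$, and let $T = \{U A^T + B V^T : A \in \R^{n\times r},\, B \in \R^{m \times r}\}$ be the tangent space at $M$ to the rank-$r$ manifold, with orthogonal projection $\PT(X) = P_U X + X P_V - P_U X P_V$. By the standard convex-duality reduction, it suffices to (a) show that $\PT \PO \PT$ is invertible on $T$, and (b) exhibit a dual certificate $Y \in \R^{m \times n}$ with $\POc(Y) = 0$, $\PT(Y) = UV^T$, and $\|\PTp(Y)\|_{\mathrm{op}} < 1$.

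The main technical estimate is a bound on $\|\PT \POc \PT\|_{\mathrm{op}}$. Expanding $\PT$ and applying Cauchy--Schwarz together with the incoherence bounds $\|P_U e_a\|^2 \le \mu_U r/m$ and $\|P_V e_b\|^2 \le \mu_V r/n$ and the $\Oc$-sparsity (at most $\kappa$ per column, $\rho$ per row), one obtains $\|\POc \PT\|_{\mathrm{op}}^2 \le 2(\kappa \mu_U/m + \rho \mu_V/n)r$. Since $\POc$ is an orthogonal projection, $\|\PT \POc \PT\|_{\mathrm{op}} = \|\POc \PT\|_{\mathrm{op}}^2 < 1$ under the hypothesis $2(\kappa\mu_U/m + \rho\mu_V/n)r < 1$. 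Consequently $\PT \PO \PT = \PT - \PT \POc \PT$ is invertible on $T$ with a geometrically convergent Neumann series $\sum_{j \ge 0}(\PT \POc \PT)^j$, settling (a).

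For (b), take $W := \sum_{j \ge 0}(\PT \POc \PT)^j(UV^T) \in T$---the unique element of $T$ with $\PT \PO(W) = UV^T$---and set $Y := \PO(W)$. Then $\POc(Y) = 0$ and $\PT(Y) = UV^T$ are immediate from the construction, while $\PTp(Y) = -\PTp \POc(W) = -\sum_{j \ge 0} \PTp \POc (\PT \POc \PT)^j (UV^T)$, whose operator norm is bounded summand-by-summand using the $\POc$-support of each term combined with incoherence and the geometric decay of $(\PT \POc \PT)^j$. Uniqueness of $M$ as the nuclear-norm optimum then follows from the standard sub-gradient calculation: any feasible perturbation $H$ with $\PO(H) = 0$ and $H \neq 0$ must lie in $\ker \PO$, and the strict bound $\|\PTp(Y)\|_{\mathrm{op}} < 1$ combined with $\PT(Y) = UV^T$ forces $\|M + H\|_* > \|M\|_*$.

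The main obstacle is the last step: bounding $\|\PTp(Y)\|_{\mathrm{op}}$ rather than $\|\PTp(Y)\|_F$, since a Frobenius-norm bound on a potentially high-rank residual would be far too weak. The key observation is that each term in the Neumann expansion is supported on $\Oc$, so its operator norm is controlled by the row/column sparsity of $\Oc$ together with the incoherence of $U$ and $V$, which---combined with the geometric decay---delivers $\|\PTp(Y)\|_{\mathrm{op}} < 1$ with the slack provided by the hypothesis. An alternative route is Gross's golfing scheme (partitioning $\Omega$ into batches and building $Y$ incrementally), but the deterministic sparsity hypothesis on $\Oc$ here is strong enough that the direct Neumann construction, as worked out by Hsu--Kakade--Zhang, suffices.
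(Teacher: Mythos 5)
The paper does not prove this statement itself: it is imported verbatim from Hsu--Kakade--Zhang \cite{HKZ11}, as the surrounding text and footnotes make clear. Your proposal correctly identifies the algorithm (nuclear-norm minimization) and the standard dual-certificate framework, and your estimate $\|\POc\PT\|_{\mathrm{op}}^2 \le 2(\kappa\mu_U/m + \rho\mu_V/n)r$ checks out: for $Z\in T$ one writes $Z = P_U Z + (I-P_U)ZP_V$, applies Cauchy--Schwarz entrywise with $\|P_U e_a\|^2 \le \mu_U r/m$, $\|P_V e_b\|^2 \le \mu_V r/n$, and sums over $\Oc$ using the $\kappa$/$\rho$ sparsity. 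That settles invertibility of $\PT\PO\PT$ and gives the Neumann series for $W$.

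The gap is in step (b). You assert that $\|\PTp(Y)\|_{\mathrm{op}}$ is ``bounded summand-by-summand using the $\Oc$-support of each term combined with incoherence and the geometric decay of $(\PT\POc\PT)^j$,'' but the only decay you have actually established for the iterates $Z_j := (\PT\POc\PT)^j(UV^T)$ is geometric decay in Frobenius norm, and that is off by a dimensional factor. Concretely, $\|\POc Z_j\|_{\mathrm{op}} \le \|Z_j\|_F \le \gamma^j\sqrt{r}$ with $\gamma = 2(\kappa\mu_U/m + \rho\mu_V/n)r$, so the tail $\sum_{j\ge 1}\|\PTp\POc Z_j\|_{\mathrm{op}}$ is only bounded by $\gamma\sqrt{r}/(1-\gamma)$ --- which exceeds $1$ once $r$ is moderately large, even when $\gamma < 1$. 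The Schur-test route that works beautifully for $j=0$ (where $|(UV^T)_{ab}| \le r\sqrt{\mu_U\mu_V/mn}$ by incoherence) does not transfer automatically to $j\ge 1$, because $\PT\POc\PT$ does not obviously preserve an entrywise-incoherence bound with the same constant: expanding $(\PT\POc Z)_{ab}$ produces a cross term $P_U\POc(Z)P_V$ involving a sum over all of $\Oc$, not just a single row or column. Making the per-iterate operator-norm bound close requires propagating a finer invariant (e.g.\ max row-norm and max column-norm of $Z_j$, or an entrywise envelope) through the recursion and showing it contracts under exactly the hypothesis $2(\kappa\mu_U/m+\rho\mu_V/n)r<1$; this bookkeeping is precisely the technical content of \cite{HKZ11} and is the step your sketch replaces with an assertion. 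So: right strategy, matching \cite{HKZ11}, correct first half, but the second half as written would not compile into a proof.
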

For the application of learning product mixtures, note that the moment tensors are incoherent exactly when the bias vectors are incoherent. In \prettyref{sec:tensor} we show how to apply \prettyref{thm:main_matrix_completion} recursively to perform a special type of adversarial tensor completion, which we use to recover the augmented moment tensors of $\cD$ after sampling.

Further, we note that \pref{thm:main_matrix_completion} is almost tight.
That is, there exist matrix completion instances with $\kappa/n = 1 - o(1)$, $\mu = 1$ and $r = 3$ for which finding any completion is NP-hard \cite{HMRW14,P97} (via a reduction from three-coloring), so the constant on the right-hand side is necessarily at most six.
We also note that the tradeoff between $\kappa/n$ and $\mu$ in \pref{thm:main_matrix_completion} is necessary because for a matrix of fixed rank, one can add extra rows and columns of zeros in an attempt to reduce $\kappa/n$, but this process increases $\mu$ by an identical factor.
This suggests that improving \pref{thm:main_learn_prod} by obtaining a better efficient adversarial matrix completion algorithm is not likely.

\subsection{Incoherence and Decomposition Uniqueness}
\label{sec:pinc}
In order to apply our completion techniques, we place the restriction of incoherence on the subspace spanned by the bias vectors. At first glance this may seem like a strange condition which is unnatural for probability distributions, but we try to motivate it here. When the bias vectors are incoherent and separated enough, even high-order moment-completion problems have unique solutions, and moreover that solution is equal to $\sum_{i\in[k]} w_i\cdot v_i^{\otimes m}$. In particular, this implies that the distribution must have a unique decomposition into a minimal number of well-separated centers (otherwise those different decompositions would produce different minimum-rank solutions to a moment-completion problem for high-enough order moments). Thus incoherence can be thought of as a special strengthening of the promise that the distribution has a unique minimal decomposition. Note that there are distributions which have a unique minimal decomposition but are not incoherent, such as a parity on any number of bits.

\section{Symmetric Tensor Completion from Multilinear Entries}
\label{sec:tensor}
In this section we use adversarial matrix completion as a primitive to give a completion algorithm for symmetric tensors when only a special kind of entry in the tensor is known.
Specifically, we call a string $X \in [n]^m$ \emph{multilinear} if every element of $X$ is distinct, and we will show how to complete a symmetric tensor $T \in \mathbb{R}^{n^m}$ when only given access to its multilinear entries, i.e. $T(X)$ is known if $X$ is multilinear.
In the next section, we will apply our tensor completion algorithm to learn mixtures of product distributions over the boolean hypercube.

Our approach is a simple recursion: we complete the tensor slice-by-slice, using the entries we learn from completing one slice to provide us with enough known entries to complete the next.
The following definition will be useful in precisely describing our recursive strategy:
\begin{definition}
    Define the \emph{histogram} of a string $X\in [n]^m$ to be the multiset containing the number of repetitions of each character making at least one appearance in $X$.
\end{definition}

 For example, the string $(1,1,2,3)$ and the string $(4,4,5,6)$ both have the histogram $(2,1,1)$.
    Note that the entries of the histogram of a string of length $m$ always sum to $m$, and that the length of the histogram is the number of distinct symbols in the string.

Having defined a histogram, we are now ready to describe our tensor completion algorithm.

\fbox{
\begin{minipage}{0.9\textwidth}
\begin{algorithm}[Symmetric Tensor Completion from Multilinear Moments]
    \label{alg:tensor-complete}

    {\bf Input:} The \emph{multilinear} entries of the tensor $T = \sum_{i\in[k]} w_i \cdot v_i^{\tensor m} + E$, for vectors $v_1,\ldots,v_k \in \R^n$ and scalars $w_1,\ldots,w_k>0$ and some error tensor $E$.
    {\bf Goal:} Recover the symmetric tensor $T^* = \sum_{i\in[k]} w_i\cdot v_i^{\tensor 3m}$.

\begin{enumerate}[noitemsep,topsep=1pt,partopsep=2pt]
    \item Initialize the tensor $\hat T$ with the known multilinear entries of $T$.
    \item\label{step:base}
	For each subset $Y \in [n]^{m-2}$ with no repetitions:
	\begin{itemize}[noitemsep]
	    \item Let $\hat T(Y,\cdot,\cdot)\in\R^{n\times n}$ be the tensor slice indexed by $Y$.
	    \item Remove the rows and columns of $\hat T(Y,\cdot,\cdot)$ corresponding to indices present in $Y$.
		Complete the matrix using the algorithm of \cite{HKZ11} from \pref{thm:main_matrix_completion} and add the learned entries to $\hat T$.
	\end{itemize}
    \item\label{step:larger} For $\ell = m-2,\ldots, 1$:
	\begin{enumerate}[noitemsep]
	\item For each $X \in [n]^m$ with a histogram of length $\ell$, if $\hat T(X)$ is empty:
	    \begin{itemize}[noitemsep]
		\item If there is an element $x_i$ appearing at least $3$ times,
		    let $Y = X \setminus \{x_i,x_i\}$.
		\item Else there are elements $x_i,x_j$ each appearing twice,
		    let $Y = X \setminus \{x_i,x_j\}$.
		\item Let $\hat T(Y,\cdot,\cdot)\in\R^{n\times n}$ be the tensor slice indexed by $Y$.
		\item Complete the matrix $\hat T(Y,\cdot,\cdot)$ using the algorithm from \pref{thm:main_matrix_completion} and add the learned entries to $\hat T$.
	    \end{itemize}
    \end{enumerate}
\item\label{step:symmetrize}
    Symmetrize $\hat T$ by taking each entry to be the average over entries indexed by the same subset.
    \end{enumerate}
    {\bf Output:} $\hat T$.
\end{algorithm}
\end{minipage}}
\bigskip

\begin{observation}
One might ask why we go through the effort of completing the tensor slice-by-slice, rather than simply flattening it to an $n^{m/2} \times n^{m/2}$ matrix and completing that.
The reason is that when $\Span{v_1,\dots,v_k}$ has incoherence $\mu$ and dimension $r$, $\Span{v_1^{\tensor m/2},\dots,v_k^{\tensor m/2}}$ may have incoherence as large as $\mu r^m/k$, which drastically reduces the range of parameters for which recovery is possible (for example, if $k = O(r)$ then we would need $r < n^{1/m}$).
Working slice-by-slice keeps the incoherence of the input matrices small, allowing us to complete even up to rank $r = \tilde{\Omega}(n)$.
\end{observation}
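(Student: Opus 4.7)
My plan has three parts: derive an upper bound on the incoherence of the tensored subspace, exhibit a concrete family saturating it, and then deduce the parameter restriction mentioned parenthetically.

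First, I would extract the pointwise consequence of incoherence for $U = \Span\{v_i\}$. Let $P$ be the orthogonal projector onto $U$; the hypothesis $\|Pe_j\|^2 \leq \mu r/n$ together with Cauchy--Schwarz gives $|u(j)| = |\iprod{Pe_j, u}| \leq \sqrt{\mu r/n}$ for every unit $u \in U$. Taking unit $v_i$ this yields $|v_i^{\tensor m/2}(I)| \leq (\mu r/n)^{m/4}$ for every index $I \in [n]^{m/2}$. Let $V$ be the $n^{m/2}\times k$ matrix whose columns are the $v_i^{\tensor m/2}$; the projector onto $U^{(m/2)} = \Span\{v_i^{\tensor m/2}\}$ equals $Q = V(V^T V)^+ V^T$, and when $\sigma_{\min}(V^T V) = \Omega(1)$ a direct computation gives $\|Qe_I\|^2 = O\bigl(\sum_i |v_i^{\tensor m/2}(I)|^2\bigr) = O\bigl(k(\mu r/n)^{m/2}\bigr)$. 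By the definition of incoherence this translates to $\tilde\mu \leq \max_I \|Qe_I\|^2 \cdot n^{m/2}/k = O((\mu r)^{m/2})$, which in the regime $k = \Theta(r)$ sits comfortably inside the quoted $\mu r^m/k$ bound.

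Second, to show the blowup is not a proof artifact, I would construct a spike family. Pick distinct coordinates $j_1,\dots,j_k \in [n]$ and set $v_i = \sqrt{\mu r/n}\, e_{j_i} + \sqrt{1 - \mu r/n}\, w_i$, where $w_1,\dots,w_k$ span a fixed well-spread $(r{-}1)$-dimensional subspace orthogonal to $\Span\{e_{j_i}\}$. Then $\Span\{v_i\}$ has dimension $r$ and incoherence $O(\mu)$ by construction, while $|v_i^{\tensor m/2}(j_i,\dots,j_i)| = (\mu r/n)^{m/4}$. Distinctness of the $j_i$ combined with the spread perpendicular components keeps $V^T V$ well-conditioned, and projecting $e_{(j_i,\dots,j_i)}$ onto $U^{(m/2)}$ yields squared norm $\Theta((\mu r/n)^{m/2})$, forcing $\tilde\mu = \Omega((\mu r)^{m/2})$ and matching the upper bound essentially tightly. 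For the downstream consequence I would apply \prettyref{thm:main_matrix_completion} to the $n^{m/2}\times n^{m/2}$ flattening of rank $k$: the multilinear entries constitute all but $\kappa, \rho = \Theta(m^2 n^{m/2-1})$ per row and column, so the recovery condition $2(\kappa + \rho)\tilde\mu k / n^{m/2} < 1$ becomes $\tilde\mu k\, m^2/n = O(1)$. Substituting the worst-case $\tilde\mu \approx (\mu r)^{m/2}$ and $k \approx r$ yields $r = O(n^{O(1/m)})$, recovering the parenthetical $r < n^{1/m}$ restriction up to constants and the loose-versus-tight form of the incoherence bound.

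The main obstacle is controlling $\sigma_{\min}(V^T V)$, since for generic $v_i$ the tensor powers $v_i^{\tensor m/2}$ can be nearly collinear, degrading the upper bound. I would address this via the pairwise separation already assumed in part (2) of \prettyref{thm:main_learn_prod}: the bound $|\iprod{v_i,v_j}| \leq (1-\eta)\|v_i\|\|v_j\|$ implies $|\iprod{v_i^{\tensor m/2}, v_j^{\tensor m/2}}| \leq (1-\eta)^{m/2}\|v_i^{\tensor m/2}\|\|v_j^{\tensor m/2}\|$, which decays geometrically in $m$. A Gershgorin-type estimate on $V^T V$ then gives $\sigma_{\min}(V^T V) = \Omega(\min_i \|v_i\|^m)$ once $m = \Omega(\log k / \log\tfrac{1}{1-\eta})$, precisely the regime handled by part (2) of the theorem and exactly the setting in which the slice-by-slice strategy pays off relative to flattening.
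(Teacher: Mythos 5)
The paper gives no proof of this remark---it is an informal aside---so your write-up is judged on its own terms, and the outline (bound the incoherence of the tensored span, exhibit a near-extremal family, plug into \prettyref{thm:main_matrix_completion} for the flattened matrix) is the right way to substantiate it. But there are two quantitative slips, and they break your claim that the upper and lower bounds match. For the upper bound you do not need any control of $\sigma_{\min}(V^TV)$: since $W = \Span\{v_i^{\tensor m/2}\}$ is contained in $U^{\tensor m/2}$ and $\Proj_{U^{\tensor m/2}} = \Proj_U^{\tensor m/2}$, every coordinate satisfies $\|\Proj_W e_I\|^2 \le \prod_j \|\Proj_U e_{i_j}\|^2 \le (\mu r/n)^{m/2}$, hence $\tilde\mu \le (\mu r)^{m/2}/k$ unconditionally; your bound $O((\mu r)^{m/2})$ is weaker by a factor of $k$ because you bounded $\|V^T e_I\|^2$ by a sum of $k$ terms and then multiplied by $n^{m/2}/k$. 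For the lower bound the same normalization cuts the other way: $W$ has dimension $k$, so a single coordinate with projection $(\mu r/n)^{m/2}$ only certifies $\tilde\mu = \Omega\bigl((\mu r)^{m/2}/k\bigr)$, not $\Omega((\mu r)^{m/2})$ as you assert---so as written your ``matching'' bounds differ by exactly a factor of $k$. Moreover, in your spike construction the $v_i$ are automatically linearly independent (their restrictions to the coordinates $j_1,\dots,j_k$ form $\sqrt{\mu r/n}$ times the identity), so $\dim\Span\{v_i\} = k$ and the promised ``dimension $r$'' holds only when $k=r$; that is fine for the regime $k=\Theta(r)$ but needs to be said, and it also makes the third paragraph's separation/Gershgorin machinery unnecessary once the containment argument above replaces the pseudoinverse computation.

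With the corrected worst case $\tilde\mu \approx (\mu r)^{m/2}/k$, the flattened condition $2(\kappa+\rho)\tilde\mu k/n^{m/2} < 1$ with $\kappa,\rho = \Theta(m^2 n^{m/2-1})$ forces $(\mu r)^{m/2} = O(n/m^2)$, i.e.\ $r = O(n^{2/m}/\mu)$ up to $m$-dependent factors---not literally the paper's $n^{1/m}$. Conversely, the figure $\mu r^m/k$ quoted in the observation exceeds the hard ceiling $(\mu r)^{m/2}/k$ whenever $\mu \le r$, so it should be read as a heuristic overestimate rather than something achievable; do not try to match it exactly. The qualitative content of the observation survives either computation: flattening caps the rank at $n^{\Theta(1/m)}$, while the slice-by-slice algorithm tolerates $r = \tilde\Omega(n)$. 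If you fix the normalization in both bounds (so both read $(\mu r)^{m/2}/k$), restrict the construction to $k=r$, and state the conclusion as $r \le n^{O(1/m)}$, your argument becomes a correct and essentially tight justification of the remark.
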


\begin{theorem}
    \label{thm:tensor-complete-alg}
    Let $T$ be a symmetric tensor of order $m$, so that $T = \sum_{i\in [k]}w_i \cdot v_i^{\tensor m}$ for some vectors $v_1,\ldots,v_k \in \R^n$ and scalars $w_1,\ldots, w_k \neq 0$.
    Let $\Span\{v_i\}$ have incoherence $\mu$ and dimension $r$.
    Given perfect access to all multilinear entries of $T$ (i.e. $E = 0$), if $4\cdot\mu\cdot r\cdot m/n < 1$, then \pref{alg:tensor-complete} returns the full tensor $T$ in time $\tilde O(n^{m+1})$.
\end{theorem}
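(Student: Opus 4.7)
The plan is to prove correctness by induction on the histogram length $\ell$ of entries being recovered, verifying at each step that the adversarial matrix completion routine \pref{thm:main_matrix_completion} applies to the slice being completed. Two facts underpin the argument. First, for any $Y \in [n]^{m-2}$, the slice $T(Y,\cdot,\cdot) = \sum_{i} w_i (\prod_j v_i(Y_j)) v_i v_i^\top$ has rank at most $r$ and its row and column spans lie in $U = \Span\{v_1,\ldots,v_k\}$, hence inherit incoherence $\mu$. Second, for any $a,b \in [n]$, the histogram length of the concatenated string $(Y,a,b)$ is at least the histogram length $\ell_Y$ of $Y$, with equality if and only if both $a,b \in \text{chars}(Y)$.

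\textbf{Base case (Step~\ref{step:base}).} For multilinear $Y$ (so $\ell_Y = m-2$), deleting the $m-2$ rows and columns indexed by $\text{chars}(Y)$ leaves a matrix in which all off-diagonal entries correspond to multilinear strings, hence are known from the input; only the diagonal is hidden, so $\kappa = \rho = 1$ on the resulting $(n-m+2) \times (n-m+2)$ matrix. A short linear-algebraic calculation shows that the Gram matrix of a basis of $U$ restricted to the surviving coordinates differs from the identity by at most $(m-2)\mu r/n < 1/4$ in operator norm under the hypothesis $4\mu r m/n < 1$, so the reduced matrix retains incoherence $O(\mu)$; applying \pref{thm:main_matrix_completion} then recovers all diagonal entries, which have histogram length $m-1$.

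\textbf{Inductive step (Step~\ref{step:larger}) and runtime.} Assuming all entries of histogram length greater than $\ell$ have been correctly recovered, for each $X$ with histogram length $\ell \leq m-2$ the algorithm picks $Y$ by removing either two copies of a symbol of multiplicity $\geq 3$ (Case~1) or one copy each of two symbols of multiplicity exactly $2$ (Case~2). The case split is exhaustive because the $m-\ell \geq 2$ units of excess multiplicity must either concentrate on a single symbol or spread across at least two. In both cases $\text{chars}(Y) = \text{chars}(X)$, so $\ell_Y = \ell$; by the combinatorial observation, the hidden entries of the full $n \times n$ slice $\hat T(Y,\cdot,\cdot)$ lie in the $\text{chars}(Y) \times \text{chars}(Y)$ block, giving $\kappa,\rho \leq m-2$. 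Since no rows or columns are removed here, the slice retains incoherence $\mu$, and \pref{thm:main_matrix_completion} applies with $2(\kappa + \rho)\mu r/n \leq 4(m-2)\mu r/n < 1$. Each slice completion via \cite{HKZ11} costs $\tilde O(n^3)$, and there are $O(n^{m-2})$ distinct slices, yielding the stated $\tilde O(n^{m+1})$ runtime; the final symmetrization in Step~\ref{step:symmetrize} is lower-order. The main technical obstacle is the incoherence bound in the base case, which requires care because rows and columns are deleted there; the remainder is combinatorial bookkeeping on histograms together with direct application of \pref{thm:main_matrix_completion}.
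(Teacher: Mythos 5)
Your proposal follows essentially the same route as the paper's proof: induction on histogram length, completing slices $T(Y,\cdot,\cdot)$ via the adversarial matrix completion guarantee, with the same case split in the inductive step and the same $\tilde O(n^{m+1})$ runtime accounting. The one place you go beyond the paper is the base case, where you explicitly bound the incoherence of the subspace after deleting the $m-2$ rows and columns indexed by $\text{chars}(Y)$ (via the Gram-matrix perturbation $B_S^\top B_S \succeq (1 - (m-2)\mu r/n)\,I_r$); the paper's proof applies \pref{thm:main_matrix_completion} to this reduced matrix without spelling out that restriction to a coordinate subspace preserves incoherence only up to a constant factor, so your version is actually slightly more careful on that point.
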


In \pref{app:error}, we give a version of \pref{thm:tensor-complete-alg} that accounts for error $E$ in the input.

\begin{proof}
We prove that \pref{alg:tensor-complete} successfully completes all the entries of $T$ by induction on the length of the histograms of the entries.
By assumption, we are given as input every entry with a histogram of length $m$.
For an entry $X$ with a histogram of length $m-1$, exactly one
of its elements has multiplicity two, call it $x_i$, and consider the set $Y = X \setminus \{x_i,x_i\}$.
When \pref{step:base} reaches $Y$, the algorithm attempts to complete a matrix revealed from $T(Y,\cdot,\cdot) = \cP_{Y}\Paren{\sum_{i\in[k]} w_i \cdot v_i(Y) \cdot v_i v_i^T}$, where $v_i(Y) = \prod_{j \in Y} v_i(j)$,
and $\cP_{Y}$ is the projector to the matrix with the rows and columns corresponding to indices appearing in $Y$ removed.
Exactly the diagonal of $T(Y,\cdot,\cdot)$ is missing since all other entries are multilinear moments, and the $(i,i)$th entry should be $T(X)$.
Because the rank of this matrix is equal to $\dim(\Span(v_i)) = r$ and $4\mu r/n \leq 4\mu r m/n < 1$, by \pref{thm:main_matrix_completion}, we can  successfully recover the diagonal, including $T(X)$.
Thus by the end of \pref{step:base}, $\hat T$ contains every entry with a histogram of length $\ell \geq m-1$.

For the inductive step, we prove that each time \pref{step:larger} completes an iteration, $\hat T$ contains every entry with a histogram of length at least $\ell$. Let $X$ be an entry with a histogram of length $\ell$.
When \pref{step:larger} reaches $X$ in the $\ell$th iteration, if $\hat T$ does not already contain $T(X)$, the algorithm attempts to complete a matrix with entries revealed from $T(Y,\cdot,\cdot) = \sum_{i\in[k]} w_i \cdot v_i(Y) \cdot v_i v_i^T$, where $Y$ is a substring of $X$ with a histogram of the same length.
Since $Y$ has a histogram of length $\ell$, every entry of $T(Y,\cdot,\cdot)$ corresponds to an entry with a histogram of length at least $\ell + 1$, except for the $\ell \times \ell$ principal submatrix whose rows and columns correspond to elements in $Y$. Thus by the inductive hypothesis, $\hat T(Y)$ is only missing the aforementioned submatrix, and since $4 \mu r \ell/n \leq 4\mu r m/n < 1$, by \pref{thm:main_matrix_completion}, we can successfully recover this submatrix, including $T(X)$. Once all of the entries of $\hat T$ are filled in, the algorithm terminates.

Finally, we note that the runtime is $\tilde O(n^{m+1})$, because the algorithm from \pref{thm:main_matrix_completion} runs in time $\tilde{O}(n^{3})$, and we perform at most $n^{m-2}$ matrix completions because there are $n^{m-2}$ strings of length $m-2$ over the alphabet $[n]$, and we perform at most one matrix completion for each such string.
\end{proof}

\section{Learning Product Mixtures over the Hypercube}
\bigskip
\label{sec:pdist}
In this section, we apply our symmetric tensor completion algorithm (\pref{alg:tensor-complete}) to learning mixtures of product distributions over the hypercube, proving \pref{thm:main_learn_prod}.
Throughout this section we will assume exact access to moments of our input distribution, deferring finite-sample error analysis to \pref{app:error}.
We begin by introducing convenient notation.

Let $\cD$ be a mixture over $k$ centers with bias vectors $v_1,\ldots,v_k \in [-1,1]^{n}$ and mixing weights $w_1,\ldots,w_k > 0$.
Define $\cM_{m}^{\cD}\in \R^{n^m}$ to be the tensor of order-$m$ moments of the distribution $\cD$, so that $\cM_m^\cD = \E_{x\sim D} \left[x^{\tensor m}\right]$.
Define $\cT_m^{\cD} \in \R^{n^m}$ to be the symmetric tensor given by the weighted bias vectors of the distribution, so that $\cT_m^\cD = \sum_{i\in[k]} w_i\cdot v_i^{\tensor m}$.

Note that $\cT_m^{\cD}$ and $\cM_m^{\cD}$ are equal on their multilinear entries, and not necessarily equal elsewhere.
For example, when $m$ is even, entries of $\cM_m^\cD$ indexed by a single repeating character (the ``diagonal'') are always equal to 1.
Also observe that if one can sample from distribution $\cD$, then estimating $\cM_m^{\cD}$ is easy.

Suppose that the bias vectors of $\cD$ are linearly independent.
Then by \pref{thm:whitening} (due to \cite{AGHkT14}, with similar statements appearing in \cite{AHK12,HK13,AGHK14}), there is a spectral algorithm which learns $\cD$ given $\cT_2^{\cD}$ and $\cT_3^{\cD}$\footnote{We remark again that the result in \cite{AGHkT14} is quite general, and applies to a large class of probability distributions of this character. However the work deals exclusively with distributions for which $\cM_2 = \cT_2$ and $\cM_3 = \cT_3$, and assumes access to $\cT_2$ and $\cT_3$ through moment estimation.}
(we give an account of the algorithm in \pref{app:whitening}).
\begin{theorem}[Consequence of Theorem 4.3 and Lemma 5.1 in \cite{AGHkT14}]
    \label{thm:whitening}
    Let $\cD$ be a mixture over $k$ centers with bias vectors $v_1,\ldots,v_k\in[-1,1]^n$ and mixing weights $w_1,\ldots, w_k>0$.
    Suppose we are given access to $\cT_2^{\cD} = \sum_{i\in[k]} w_i \cdot v_i v_i^T$ and $\cT_3^{\cD} = \sum_{i\in[k]} w_i \cdot v_i^{\tensor 3}$.
    Then there is an algorithm which recovers the bias vectors and mixing weights of $\cD$ within $\epsilon$ in time $O(n^3 + k^4\cdot(\log \log \frac{1}{\epsilon\sqrt{w_i{\min}}}))$.
\end{theorem}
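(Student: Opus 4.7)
The plan is to invoke the standard \emph{whitening plus tensor power iteration} recipe. Since this proposition is used in the regime where $v_1,\ldots,v_k$ are linearly independent, $\cT_2^{\cD} = \sum_i w_i v_i v_i^T$ is positive semidefinite of rank exactly $k$. First I would compute its truncated eigendecomposition $\cT_2^{\cD} = U\Sigma U^T$ with $U \in \R^{n\times k}$ orthonormal and $\Sigma \in \R^{k\times k}$ diagonal and strictly positive; this costs $O(n^3)$ time via standard linear algebra and accounts for the $n^3$ term in the runtime.

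I would then form the whitening map $W := U\Sigma^{-1/2} \in \R^{n\times k}$, which satisfies $W^T \cT_2^{\cD} W = I_k$. Setting $u_i := \sqrt{w_i}\, W^T v_i$, the whitening identity rewrites as $\sum_i u_i u_i^T = I_k$, so $\{u_i\}_{i=1}^k$ is an orthonormal basis of $\R^k$. Applying $W$ to each of the three modes of $\cT_3^{\cD}$ then produces an orthogonally decomposable tensor
\[
    \tilde T \;=\; \cT_3^{\cD}(W,W,W) \;=\; \sum_{i\in[k]} \tfrac{1}{\sqrt{w_i}}\, u_i^{\tensor 3}\mper
\]

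Next I would feed $\tilde T$ into the robust tensor power method of \cite[Theorem 4.3]{AGHkT14} combined with deflation, which extracts each pair $(\lambda_i, u_i)$ with $\lambda_i = 1/\sqrt{w_i}$ to accuracy $\epsilon'$ in time $O(k^4 \log\log(1/\epsilon'))$. From $\lambda_i$ I read off $w_i = 1/\lambda_i^2$. To obtain the bias vectors themselves, I unwhiten using the pseudoinverse $(W^T)^{+} = U\Sigma^{1/2}$: since $W^T v_i = u_i/\sqrt{w_i} = \lambda_i u_i$ and $W^T$ is injective on $\Span\{v_i\}$, the identity $v_i = \lambda_i \cdot (W^T)^{+} u_i$ recovers each bias vector exactly.

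The principal obstacle is the convergence analysis of the tensor power iteration on $\tilde T$ --- but this is exactly the content of \cite[Theorem 4.3]{AGHkT14}, so I invoke it as a black box. The remaining issue is calibrating the target decomposition accuracy $\epsilon'$ so that after unwhitening the bias vectors are recovered within $\epsilon$. Since $(W^T)^{+}$ amplifies errors by a factor controlled by the smallest singular value of $\Sigma^{1/2}$, which scales like $\sqrt{w_{\min}}$ (times a term depending on the bias vectors, which we absorb into the big-$O$), taking $\epsilon' = \Theta(\epsilon \sqrt{w_{\min}})$ suffices; this explains the $\log\log(1/(\epsilon\sqrt{w_{\min}}))$ factor in the stated runtime.
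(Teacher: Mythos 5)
Your proposal is correct and matches the paper's approach exactly: the paper's \pref{app:whitening} likewise whitens $\cT_2^{\cD}$ via $(\cT_2^{\cD})^{\dagger 1/2}$ to orthogonalize the $\sqrt{w_i}\,v_i$, applies the whitening map to all three modes of $\cT_3^{\cD}$ to obtain $\sum_i \tfrac{1}{\sqrt{w_i}} u_i^{\tensor 3}$, runs the robust tensor power method of \cite{AGHkT14} (Theorem 4.3/5.1) to extract the orthonormal $u_i$ and coefficients $1/\sqrt{w_i}$, and unwhitens to recover the $v_i$ and $w_i$. Your accounting for the runtime — $O(n^3)$ for the eigendecomposition, $\log\log$ dependence from the power-iteration convergence, and the $\sqrt{w_{\min}}$ calibration from the unwhitening amplification — is the same justification the paper implicitly relies on.
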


Because $\cT_2^{\cD}$ and $\cT_3^{\cD}$ are equal to $\cM_2^{\cD}$ and $\cM_3^{\cD}$ on their multilinear entries, the tensor completion algorithm of the previous section allows us to find $\cT_2^{\cD}$ and $\cT_3^{\cD}$ from $\cM_2^{\cD}$ and $\cM_3^{\cD}$ (this is only possible because $\cT_2^{\cD}$ and $\cT_3^{\cD}$ are low-rank, whereas $\cM_2^{\cD}$ and $\cM_3^{\cD}$ are high-rank).
We then learn $\cD$ by applying \pref{thm:whitening}.

A complication is that \pref{thm:whitening} only allows us to recover the parameters of $\cD$ if the bias vectors are linearly independent.
However, if the vectors $v_1,\ldots, v_k$ are not linearly independent,
we can reduce to the independent case by working instead with $v_1^{\tensor m},\ldots,v_k^{\tensor m}$ for sufficiently large $m$.
The tensor power we require depends on the \emph{separation} between the bias vectors:
\begin{definition}
    We call a set of vectors $v_1,\ldots, v_k$~$\eta$-separated if for every $i,j\in[k]$ such that $i\neq j$,
    \[
	\left|\Iprod{v_i, v_j}\right| \le \|v_i\|\cdot \|v_j\|\cdot (1-\eta).
    \]
\end{definition}

\begin{lemma}
    \label{lem:tensor-powers}
Suppose that $v_1,\ldots, v_k \in \R^n$ are vectors which are $\eta$-separated, for $\eta > 0$.
Let $m \geq \lceil\log_{\frac{1}{1-\eta}} k\rceil$.
Then $v_1^{\tensor m},\ldots, v_k^{\tensor m}$ are linearly independent.
\end{lemma}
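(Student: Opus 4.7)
The plan is to pass to the Gram matrix of the tensor powers and show it is strictly diagonally dominant. Without loss of generality we may replace each $v_i$ by $v_i/\|v_i\|$, since scaling does not affect linear independence; so assume the $v_i$ are unit vectors. Let $G \in \R^{k \times k}$ be the Gram matrix defined by $G_{ij} = \iprod{v_i^{\tensor m}, v_j^{\tensor m}}$. Linear independence of $v_1^{\tensor m},\ldots,v_k^{\tensor m}$ is equivalent to $G$ being nonsingular, so it suffices to prove $G$ is nonsingular.

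The key identity I would use is that the inner product on the tensor product space factors along the tensor structure: $\iprod{v_i^{\tensor m}, v_j^{\tensor m}} = \iprod{v_i, v_j}^m$. Combining this with the $\eta$-separation assumption (and the unit-norm reduction) gives $G_{ii} = 1$ for all $i$, and $|G_{ij}| \leq (1-\eta)^m$ for $i \neq j$. Now I would plug in the hypothesis $m \geq \lceil \log_{1/(1-\eta)} k \rceil$, which yields $(1-\eta)^m \leq 1/k$.

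Finally, I would invoke strict diagonal dominance (equivalently, Gershgorin's circle theorem): for each row $i$, the sum of the off-diagonal absolute values satisfies
\[
\sum_{j \neq i} |G_{ij}| \le (k-1)\cdot \frac{1}{k} < 1 = |G_{ii}|,
\]
so every Gershgorin disk excludes the origin and $G$ is nonsingular. Therefore the tensor powers $v_1^{\tensor m}, \ldots, v_k^{\tensor m}$ are linearly independent, as desired.

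There is no real obstacle here; the only subtlety is to remember that linear independence is invariant under rescaling the individual vectors, so we are free to normalize before invoking $\eta$-separation (which is itself a statement about normalized inner products). Everything else is a direct computation plus a standard diagonal-dominance argument.
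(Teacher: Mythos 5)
Your proof is correct and follows essentially the same path as the paper: normalize, pass to the Gram matrix of the $m$-fold tensor powers, use $\iprod{u^{\tensor m},w^{\tensor m}} = \iprod{u,w}^m$ to bound off-diagonal entries by $(1-\eta)^m \le 1/k$, and conclude by strict diagonal dominance. No discrepancies.
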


\begin{proof}
For vectors $u,w \in \R^n$ and for an integer $t \ge 0$, we have that $\iprod{u^{\tensor t},w^{\tensor t}} = \iprod{u,w}^t$.
If $v_1,\ldots,v_k$ are $\eta$-separated, then for all $i\neq j$,
\[
    \left|\Iprod{\tfrac{v_i^{\tensor m}}{\|v_i\|^m},\tfrac{v_j^{\tensor m}}{\|v_j\|^m}}\right|  \le |(1-\eta)^m| \le \tfrac{1}{k}.
\]
Now considering the Gram matrix of the vectors $(\tfrac{v_i}{\|v_i\|})^{\tensor m}$, we have a $k \times k$ matrix with diagonal entries of value $1$ and off-diagonal entries with maximum absolute value $\tfrac{1}{k}$. This matrix is strictly diagonally dominant, and thus full rank, so the vectors must be linearly independent.
\end{proof}

\begin{remark}
We re-iterate here that in the case where $\eta = 0$, we can reduce our problem to one with fewer centers, and so our runtime is never infinite.
Specifically, if $v_i = v_j$ for some $i\neq j$, then we can describe the same distribution by omitting $v_j$ and including $v_i$ with weight $w_i + w_j$.
If $v_i = -v_j$, in the even moments we will see the center $v_i$ with weight $w_i + w_j$, and in the odd moments we will see $v_i$ with weight $w_i - w_j$.
So we simply solve the problem by taking $m' = 2m$ for the first odd $m$ so that the $v^{\tensor m}$ are linearly independent, so that both the $2m'$- and $3m'$-order moments are even to learn $w_i + w_j$ and $\pm v_i$, and then given the decomposition into centers we can extract $w_i$ and $w_j$ from the order-$m$ moments by solving a linear system.
\end{remark}

Thus, in the linearly dependent case, we may choose an appropriate power $m$, and instead apply the tensor completion algorithm to $\cM_{2m}^\cD$ and $\cM_{3m}^\cD$ to recover $\cT_{2m}^\cD$ and $\cT_{3m}^\cD$.
We will then apply \pref{thm:whitening} to the vectors $v_1^{\tensor m},\ldots, v_k^{\tensor m}$ in the same fashion.

Here we give the algorithm assuming perfect access to the moments of $\cD$ and defer discussion of the finite-sample case to \pref{app:error}.

\bigskip
\fbox{
\begin{minipage}{0.9\textwidth}
\begin{algorithm}[Learning Mixtures of Product Distributions]
    \label{alg:recovery}

    {\bf Input:} Moments of the distribution $\cD$.
    {\bf Goal:} Recover $v_1,\dots,v_k$ and $w_1,\dots,w_k$.
	\smallskip

	Let $m$ be the smallest odd integer such that $v_1^{\tensor m},\dots,v_k^{\tensor m}$ are linearly independent.
 Let $\hat M = \cM_{2m} + \hat E_2$ and $\hat T = \cM_{3m} + \hat E_3$ be approximations to the moment tensors of order $2m$ and $3m$.
    \begin{enumerate}[itemsep=0.5pt,topsep=1pt,partopsep=2pt]
    \item\label{step:tensor-comp}
	Set the non-multilinear entries of $\hat M$ and $\hat T$ to ``missing,'' and
	run \pref{alg:tensor-complete} on $\hat M$ and $\hat T$ to recover $M' = \sum_i w_i\cdot v_i^{\tensor 2m} + E_2'$ and $T' = \sum_i w_i \cdot v_i^{\tensor 3m} + E_3'$.
    \item\label{step:flattening} Flatten $M'$ to the $n^m \times n^m$ matrix $M = \sum_i w_i\cdot v_i^{\tensor m}(v_i^{\tensor m})^\top + E_2$ and similarly flatten $T'$ to the $n^m \times n^m \times n^m$ tensor $T = \sum_i w_i \cdot (v_i^{\tensor m})^{\tensor 3} + E_3$.
	\item\label{step:whitening} Run the ``whitening'' algorithm from \pref{thm:whitening} (see \pref{app:whitening}) on $(M,T)$ to recover $w_1,\dots,w_k$ and $v_1^{\tensor m},\dots,v_k^{\tensor m}$.
	\item\label{step:roots} Recover $v_1,\dots,v_k$ entry-by-entry, by taking the $m$th root of the corresponding entry in $v_1^{\tensor m},\dots,v_k^{\tensor m}$.
	\end{enumerate}
	{\bf Output:} $w_1,\dots,w_k$ and $v_1,\dots,v_k$.
\end{algorithm}
\end{minipage}}
\bigskip

Now \pref{thm:main_learn_prod} is a direct result of the correctness of \pref{alg:recovery}:

\begin{proof}[Proof of \pref{thm:main_learn_prod}]
    The proof follows immediately by combining \pref{thm:whitening} and \pref{thm:tensor-complete-alg}, and noting that the parameter $m$ is bounded by $m \leq 2+\log_{\frac{1}{1-\eta}} k$.
\end{proof}

\section*{Acknowledgements}
We would like to thank Prasad Raghavendra, Satish Rao, and Ben Recht for helpful discussions, and Moritz Hardt and Samuel B. Hopkins for helpful questions.
We also thank several anonymous reviewers for very helpful comments.

\addreferencesection
\bibliographystyle{amsalpha}
\bibliography{writeup.bib}

\newcommand{\etalchar}[1]{$^{#1}$}
\providecommand{\bysame}{\leavevmode\hbox to3em{\hrulefill}\thinspace}
\providecommand{\MR}{\relax\ifhmode\unskip\space\fi MR }
% \MRhref is called by the amsart/book/proc definition of \MR.
\providecommand{\MRhref}[2]{%
  \href{http://www.ams.org/mathscinet-getitem?mr=#1}{#2}
}
\providecommand{\href}[2]{#2}
\begin{thebibliography}{HMRW14}

\bibitem[AGH{\etalchar{+}}14]{AGHkT14}
Animashree Anandkumar, Rong Ge, Daniel Hsu, Sham~M. Kakade, and Matus
  Telgarsky, \emph{Tensor decompositions for learning latent variable models},
  Journal of Machine Learning Research \textbf{15} (2014), no.~1, 2773--2832.

\bibitem[AGHK14]{AGHK14}
Animashree Anandkumar, Rong Ge, Daniel Hsu, and Sham~M. Kakade, \emph{A tensor
  approach to learning mixed membership community models}, Journal of Machine
  Learning Research \textbf{15} (2014), no.~1, 2239--2312.

\bibitem[AGJ14a]{AGJ14b}
Anima Anandkumar, Rong Ge, and Majid Janzamin, \emph{Analyzing tensor power
  method dynamics: Applications to learning overcomplete latent variable
  models}, CoRR \textbf{abs/1411.1488} (2014).

\bibitem[AGJ14b]{AGJ14}
Animashree Anandkumar, Rong Ge, and Majid Janzamin, \emph{Guaranteed
  non-orthogonal tensor decomposition via alternating rank-1 updates}, CoRR
  \textbf{abs/1402.5180} (2014).

\bibitem[AHK12]{AHK12}
Animashree Anandkumar, Daniel Hsu, and Sham~M. Kakade, \emph{A method of
  moments for mixture models and hidden markov models}, {COLT} 2012 - The 25th
  Annual Conference on Learning Theory, June 25-27, 2012, Edinburgh, Scotland,
  2012, pp.~33.1--33.34.

\bibitem[AM05]{AM05}
Dimitris Achlioptas and Frank McSherry, \emph{On spectral learning of mixtures
  of distributions}, Learning Theory, 18th Annual Conference on Learning
  Theory, {COLT}, 2005, pp.~458--469.

\bibitem[BCMV14]{BCMV14}
Aditya Bhaskara, Moses Charikar, Ankur Moitra, and Aravindan Vijayaraghavan,
  \emph{Smoothed analysis of tensor decompositions}, Symposium on Theory of
  Computing, {STOC}, 2014.

\bibitem[BKS15]{BKS15}
Boaz Barak, Jonathan~A. Kelner, and David Steurer, \emph{Dictionary learning
  and tensor decomposition via the sum-of-squares method}, Proceedings of the
  Forty-Seventh Annual {ACM} on Symposium on Theory of Computing, {STOC} 2015,
  Portland, OR, USA, June 14-17, 2015, 2015, pp.~143--151.

\bibitem[BM15]{BM15}
Boaz Barak and Ankur Moitra, \emph{Tensor prediction, rademacher complexity and
  random 3-xor}, CoRR \textbf{abs/1501.06521} (2015).

\bibitem[C99]{C99}
Ph.D. thesis.

\bibitem[CGG01]{CGG01}
Mary Cryan, Leslie~Ann Goldberg, and Paul~W. Goldberg, \emph{Evolutionary trees
  can be learned in polynomial time in the two-state general markov model},
  {SIAM} J. Comput. \textbf{31} (2001), no.~2, 375--397.

\bibitem[CP09]{CP09}
Emmanuel~J. Cand{\`{e}}s and Yaniv Plan, \emph{Matrix completion with noise},
  CoRR \textbf{abs/0903.3131} (2009).

\bibitem[CR08]{CRao08}
Kamalika Chaudhuri and Satish Rao, \emph{Learning mixtures of product
  distributions using correlations and independence}, 21st Annual Conference on
  Learning Theory - {COLT} 2008, Helsinki, Finland, July 9-12, 2008, 2008,
  pp.~9--20.

\bibitem[CR09]{CR09}
Emmanuel~J. Cand{\`e}s and Benjamin Recht, \emph{Exact matrix completion via
  convex optimization}, Foundations of Computational Mathematics \textbf{9}
  (2009), no.~6, 717--772.

\bibitem[Fel07]{Feld07}
Vitaly Feldman, \emph{Attribute-efficient and non-adaptive learning of parities
  and {DNF} expressions}, Journal of Machine Learning Research \textbf{8}
  (2007), 1431--1460.

\bibitem[FM99]{FM99}
Yoav Freund and Yishay Mansour, \emph{Estimating a mixture of two product
  distributions}, Proceedings of the Twelfth Annual Conference on Computational
  Learning Theory, {COLT} Santa Cruz, CA, USA, July 7-9, 1999, pp.~53--62.

\bibitem[FOS08]{FOS08}
Jon Feldman, Ryan O'Donnell, and Rocco~A. Servedio, \emph{Learning mixtures of
  product distributions over discrete domains}, {SIAM} J. Comput. \textbf{37}
  (2008), no.~5, 1536--1564.

\bibitem[GAGG13]{Getal13}
Suriya Gunasekar, Ayan Acharya, Neeraj Gaur, and Joydeep Ghosh, \emph{Noisy
  matrix completion using alternating minimization}, Machine Learning and
  Knowledge Discovery in Databases (Hendrik Blockeel, Kristian Kersting,
  Siegfried Nijssen, and Filip {\v{Z}}elezn{\'{y}}, eds.), Lecture Notes in
  Computer Science, vol. 8189, Springer Berlin Heidelberg, 2013, pp.~194--209
  (English).

\bibitem[GHJY15]{GHJY15}
Rong Ge, Furong Huang, Chi Jin, and Yang Yuan, \emph{Escaping from saddle
  points - online stochastic gradient for tensor decomposition}, CoRR
  \textbf{abs/1503.02101} (2015).

\bibitem[Har70]{H70}
Richard~A. Harshman, \emph{{Foundations of the PARFAC Procedure: Models and
  Conditions for and ``Explanatory'' Multimodal Factor Analysis}}.

\bibitem[Har14]{H14}
Moritz Hardt, \emph{Understanding alternating minimization for matrix
  completion}, 55th {IEEE} Annual Symposium on Foundations of Computer Science,
  {FOCS} 2014, Philadelphia, PA, USA, October 18-21, 2014, 2014, pp.~651--660.

\bibitem[HK13]{HK13}
Daniel Hsu and Sham~M. Kakade, \emph{Learning mixtures of spherical gaussians:
  moment methods and spectral decompositions}, Innovations in Theoretical
  Computer Science, {ITCS} '13, Berkeley, CA, USA, January 9-12, 2013, 2013,
  pp.~11--20.

\bibitem[HKZ11]{HKZ11}
Daniel Hsu, Sham~M. Kakade, and Tong Zhang, \emph{Robust matrix decomposition
  with sparse corruptions}, {IEEE} Transactions on Information Theory
  \textbf{57} (2011), no.~11, 7221--7234.

\bibitem[HMRW14]{HMRW14}
Moritz Hardt, Raghu Meka, Prasad Raghavendra, and Benjamin Weitz,
  \emph{Computational limits for matrix completion}, CoRR
  \textbf{abs/1402.2331} (2014).

\bibitem[JO13]{JO13}
Prateek Jain and Sewoong Oh, \emph{Learning mixtures of discrete product
  distributions using spectral decompositions.}, CoRR \textbf{abs/1311.2972}
  (2013).

\bibitem[JO14]{JO14}
Prateek Jain and Sewoong Oh, \emph{Provable tensor factorization with missing
  data}, Advances in Neural Information Processing Systems 27: Annual
  Conference on Neural Information Processing Systems 2014, December 8-13 2014,
  Montreal, Quebec, Canada, 2014, pp.~1431--1439.

\bibitem[LCC07]{LCC07}
Lieven~De Lathauwer, Jos{\'{e}}phine Castaing, and Jean{-}Fran{\c{c}}ois
  Cardoso, \emph{Fourth-order cumulant-based blind identification of
  underdetermined mixtures}, {IEEE} Transactions on Signal Processing
  \textbf{55} (2007), no.~6-2, 2965--2973.

\bibitem[MOS04]{MOS04}
Elchanan Mossel, Ryan O'Donnell, and Rocco~A. Servedio, \emph{Learning
  functions of k relevant variables}, J. Comput. Syst. Sci. \textbf{69} (2004),
  no.~3, 421--434.

\bibitem[MR06]{MR06}
Elchanan Mossel and S\'{e}bastien Roch, \emph{Learning nonsingular phylogenies
  and hidden markov models}, The Annals of Applied Probability \textbf{16}
  (2006), no.~2, 583--614.

\bibitem[Pee96]{P97}
Ren{\`{e}} Peeters, \emph{Orthogonal representations over finite fields and the
  chromatic number of graphs}, Combinatorica \textbf{16} (1996), no.~3,
  417--431 (English).

\bibitem[Rec09]{R09}
Benjamin Recht, \emph{A simpler approach to matrix completion}, CoRR
  \textbf{abs/0910.0651} (2009).

\bibitem[Reg09]{Reg09}
Oded Regev, \emph{On lattices, learning with errors, random linear codes, and
  cryptography}, J. {ACM} \textbf{56} (2009), no.~6.

\bibitem[TS15]{TS15}
Gongguo Tang and Parikshit Shah, \emph{Guaranteed tensor decomposition: A
  moment approach}.

\bibitem[Val15]{Val15}
Gregory Valiant, \emph{Finding correlations in subquadratic time, with
  applications to learning parities and the closest pair problem}, J. {ACM}
  \textbf{62} (2015), no.~2, 13.

\end{thebibliography}

\appendix

\section{Tensor Completion with Noise}
\label{app:noisy-nnm}
Here we will present a version of \pref{thm:tensor-complete-alg} which account for noise in the input to the algorithm.

We will first require a matrix completion algorithm which is robust to noise.
The work of \cite{HKZ11} provides us with such an algorithm; the following theorem is a consequence of their work.\footnote{In a previous version of this paper, we derive \pref{thm:noisy_matrix_completion} as a consequence of \pref{thm:main_matrix_completion} and the work of \cite{CP09}; we refer the interested reader to \url{http://arxiv.org/abs/1506.03137v2} for the details.}

\begin{theorem}\label{thm:noisy_matrix_completion}
Let $M$ be an $m \times n$ rank-$r$ matrix which is $(\mu_U,\mu_V)$-incoherent, and let $\Oc \subset [m] \times [n]$ be the set of hidden indices.
If there are at most $\kappa$ elements per column and $\rho$ elements per row of $\Oc$, and if $2(\kappa\frac{\mu_U}{m} + \rho\frac{\mu_V}{n})r < 1$, then let $\alpha = \frac{3}{2}(\kappa\frac{\mu_U}{m} + \rho\frac{\mu_V}{n})r$ and $\beta = \frac{r}{1-\lambda}\sqrt{\frac{\kappa\rho\mu_U\mu_V}{mn}}$. In particular, $\alpha < 1$ and $\beta < 1$. Then for every $\delta > 0$, there is a semidefinite program that computes outputs $\hat{M}$ satisfying
\[\|\hat{M} - M\|_F \leq 2\delta + \frac{2\delta\sqrt{\min(n,m)}}{1-\beta}\sqrt{1+\frac{1}{1-\alpha}}. \]
\end{theorem}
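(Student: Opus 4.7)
The plan is to derive the theorem from the nuclear-norm-minimization analysis of \cite{HKZ11}. Concretely, I would take the estimator to be the solution of the semidefinite program
\[
    \hat M \;=\; \arg\min_{X} \|X\|_* \quad \text{subject to} \quad \|\PO(X - \tilde M)\|_F \le \delta,
\]
where $\tilde M$ denotes the noisy input (so $\|\PO(\tilde M - M)\|_F \le \delta$), and $\PO$ is the projection onto the \emph{observed} entries (the complement of $\Oc$). The matrix $M$ itself is feasible, so $\|\hat M\|_* \le \|M\|_*$; writing $H := \hat M - M$ reduces the problem to bounding $\|H\|_F$.

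The core of the argument is the standard tangent-space decomposition. Let $T$ be the tangent space at $M$ to the manifold of rank-$r$ matrices, i.e.\ the span of matrices whose column space lies in $U$ or whose row space lies in $V$. The first step is a subgradient / dual-certificate argument that controls $\|\PTp H\|_*$ in terms of $\|\PT H\|_F$ plus a term proportional to $\delta$. The relevant dual certificate, constructed under the hypothesis $2(\kappa\mu_U/m + \rho\mu_V/n)r < 1$, is exactly the one that drives exact recovery in \pref{thm:main_matrix_completion}; the only change in the noisy setting is that feasibility is now a soft constraint, which contributes the additive $\delta$-slack.

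The second step bounds $\|\PT H\|_F$ using two operator-norm estimates from \cite{HKZ11}: (i) $\|\PT \POc \PT\|_\mathrm{op} \le \alpha < 1$, which implies $\PO$ is a restricted isometry on $T$ with lower constant $\sqrt{1-\alpha}$; and (ii) the cross-term $\|\PT \POc \PTp\|_\mathrm{op}$ is bounded by $\beta$, a quantity built from the coherence product $\sqrt{\kappa\rho\mu_U\mu_V/(mn)}$. Combined with the previous step, these two estimates yield a linear system in the unknowns $\|\PT H\|_F$ and $\|\PTp H\|_F$. Solving that system, and converting back to $\|H\|_F$ via the triangle inequality and the rank-$2r$ bound on $H$ (which is where the $\sqrt{\min(m,n)}$ factor enters, through the nuclear-to-Frobenius conversion on a matrix living in a $\min(m,n)$-dimensional singular-value space), produces the stated bound.

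The main obstacle is that this is not an immediate black-box citation of \cite{HKZ11}: one must track the explicit constants through the noisy version of their analysis in order to arrive at the precise bound $2\delta + (2\delta\sqrt{\min(m,n)}/(1-\beta))\sqrt{1 + 1/(1-\alpha)}$. In particular, one needs to verify that the dual certificate originally constructed for exact recovery continues to certify the solution up to only a mild multiplicative loss once the constraint is relaxed by $\delta$, and that the per-row/per-column sparsity assumptions on $\Oc$ (rather than a total-sparsity bound) are used consistently throughout the coherence estimates, so as to produce the correct dependence on $\kappa$ and $\rho$ in both $\alpha$ and $\beta$.
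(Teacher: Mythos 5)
The paper itself does not give a proof of this statement: it is invoked as ``a consequence of the work of~\cite{HKZ11},'' and a footnote points the reader to an earlier arXiv version of the paper, where the theorem was derived from the noiseless result \pref{thm:main_matrix_completion} together with the Cand\`es--Plan stability analysis~\cite{CP09}. Your proposal reconstructs essentially that route --- nuclear-norm minimization under a $\delta$-slack constraint, feasibility of $M$ giving $\|\hat M\|_* \le \|M\|_*$, the tangent-space split $\PT / \PTp$ of $H = \hat M - M$, a dual certificate re-used from the exact-recovery argument, and the two coherence-derived parameters $\alpha, \beta$ controlling the restricted isometry of $\PO$ on $T$ and the quality of the certificate. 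This is the standard Cand\`es--Plan/HKZ template and is the same high-level plan the paper implicitly relies on, so the approach matches.

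Two points are worth tightening. First, you describe $\beta$ as bounding the ``cross-term'' $\|\PT\POc\PTp\|_{\mathrm{op}}$, but that is not what $\beta$ controls. In the HKZ-style construction, $\alpha = \|\PT\POc\PT\|$ is indeed the restricted-isometry defect on $T$, but $\beta$ bounds the $T^\perp$-component $\|\PTp Y\|$ of the dual certificate $Y$, which is built so that $\PO Y = 0$ and $\PT Y$ approximates $UV^\top$; the $(1-\lambda)^{-1}$ factor in the definition of $\beta$ is exactly the Neumann-series accumulation from inverting $\PT\PO\PT$ on $T$ (this is also the source of the undefined $\lambda$ in the theorem statement --- it should be the operator-norm bound used in that inversion). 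Second, the theorem as stated omits the noise hypothesis; you correctly supply the intended one, $\|\PO(\tilde M - M)\|_F \le \delta$, which is needed for both $\hat M$ and $M$ to be feasible and for the $2\delta$ term to appear. Your remark that the $\sqrt{\min(m,n)}$ factor comes from a nuclear-to-Frobenius conversion is directionally right but belongs on the $\POc H$ side (passing from the nuclear-norm bound on $\POc H$ obtained via the certificate to a Frobenius bound costs a dimension factor), not on $H$ itself. None of these are fatal, but they are exactly the bookkeeping one must get right to land on the quoted constants, so the ``main obstacle'' you flag is real.
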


We now give an analysis for the performance of our tensor completion algorithm, \pref{alg:tensor-complete}, in the presence of noise in the input moments.
This will enable us to use the algorithm on empirically estimated moments.
\begin{theorem} \label{thm:completion-err}
    Let $T^*$ be a symmetric tensor of order $m$, so that $T^* = \sum_{i\in [k]}w_i \cdot v_i^{\tensor m}$ for some vectors $v_1,\ldots,v_k \in \R^n$ and scalars $w_1,\ldots, w_k \neq 0$.
    Let $\Span\{v_i\}$ have incoherence $\mu$ and dimension $r$.
    Suppose we are given access to $T = T^* + E$, where $E$ is a noise tensor with $|E(Y)| \le \epsilon$ for every $Y \in [n]^m$.
    Then if
    \[
	4 \cdot k \cdot \mu \cdot m \le n,
    \]
    Then \pref{alg:tensor-complete} recovers a symmetric tensor $\hat T$ such that
    \[
	\| \hat T(X,\cdot,\cdot) - T^*(X,\cdot,\cdot) \|_F \le 4\cdot \epsilon \cdot (5n^{3/2})^{m-1},
    \]
    for any slice $T(X,\cdot,\cdot)$ indexed by a string $X \in [n]^{m-2}$,
    in time $\tilde O(n^{m+1})$. In particular, the total Frobenius norm error $\|\hat T - T^*\|_F$ is bounded by $4\cdot \epsilon \cdot (5n^{3/2})^{\frac{3}{2}m - 2}$.
\end{theorem}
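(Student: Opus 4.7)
The plan is to follow the same recursion on histogram length as in the proof of \pref{thm:tensor-complete-alg}, but at every completion step I would invoke the noisy primitive \pref{thm:noisy_matrix_completion} in place of \pref{thm:main_matrix_completion} and track how noise propagates through the $m-1$ levels. The first task is to check that each sub-problem satisfies the hypotheses of \pref{thm:noisy_matrix_completion} uniformly. Every slice completed by \pref{alg:tensor-complete} has the form $\hat T(Y,\cdot,\cdot) = \sum_{i} w_i v_i(Y)\, v_i v_i^\top$, so its row and column spaces are contained in $\Span\{v_i\}$ and inherit incoherence $\mu$ and rank at most $r \le k$; a short counting argument shows that the number of missing entries per row or column is always $O(m)$. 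The assumption $4 k\mu m \le n$ then forces the parameters $\alpha,\beta$ of \pref{thm:noisy_matrix_completion} to be bounded by a constant (say $3/4$) across all sub-problems, so each invocation amplifies the Frobenius norm of the input noise by at most an absolute constant times $\sqrt{n}$.

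Next I would set up the error recursion. Let $F_\ell$ denote a worst-case bound on $\|\hat T(Y,\cdot,\cdot) - T^*(Y,\cdot,\cdot)\|_F$ after level $\ell$ of completion; since entry-wise error is dominated by Frobenius error, the entries produced at level $\ell$ have max absolute error at most $F_\ell$. In the base case \pref{step:base}, the known multilinear entries have entry-wise error $\epsilon$, contributing Frobenius noise at most $n\epsilon$ over the $\le n^2$ known positions, so \pref{thm:noisy_matrix_completion} yields $F_1 \le Cn^{3/2}\epsilon$ for an absolute constant $C$. For $\ell \ge 2$ in \pref{step:larger}, the known entries are outputs of previous rounds with per-entry error $\le F_{\ell-1}$, so the input Frobenius noise is $\le n F_{\ell-1}$, giving the recursion $F_\ell \le C n^{3/2} F_{\ell-1}$. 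Unrolling through all $m-1$ levels produces $F_{m-1} \le (Cn^{3/2})^{m-1}\epsilon$, which with $C = 5$ and a little slack matches the per-slice bound claimed in the theorem; the symmetrization in \pref{step:symmetrize} is a convex average and cannot increase Frobenius error. The global bound on $\|\hat T - T^*\|_F$ then follows by summing squared per-slice Frobenius errors across the $n^{m-2}$ slices $T^*(X,\cdot,\cdot)$ and taking a square root, producing a bound of the shape $n^{(m-2)/2}\cdot 4\epsilon (5n^{3/2})^{m-1}$ that can be loosely absorbed into $4\epsilon (5n^{3/2})^{\frac{3}{2}m - 2}$.

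The main obstacle in this plan is the pessimistic $n^{3/2}$ amplification per level: converting the output of one round (controlled in Frobenius norm) into input of the next (used with an $\ell^\infty$ bound on known entries) costs a factor $n$, and \pref{thm:noisy_matrix_completion} itself costs a further $\sqrt{n}$, giving a factor that is raised to the $m$-th power in the final bound. Sharpening this would require either a completion guarantee with entry-wise error control, or an argument that exploits the fact that many different slices expose overlapping entries of the same underlying tensor and thus provide redundant information. Neither refinement is needed here, though: the hypotheses $4k\mu m \le n$ and uniform boundedness of $\alpha,\beta$ are exactly what is needed to make the crude recursion go through and deliver the stated bound.
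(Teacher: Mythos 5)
Your proposal follows the same approach as the paper's proof: an induction over completion levels (equivalently, over the histogram length of the entries), using \pref{thm:noisy_matrix_completion} to bound the per-step Frobenius amplification by $O(\sqrt{n})$, converting the resulting Frobenius bound into a per-entry bound and back into a Frobenius bound at the next level to obtain the $F_\ell \le 5n^{3/2} F_{\ell-1}$ recursion, and then summing squared slice errors over the $n^{m-2}$ slices for the global bound. The only difference is cosmetic bookkeeping — you index by level of completion, the paper indexes directly by histogram length, and the paper does a (slightly lossy) explicit count of entries within a slice to produce the constant $4$ — but the argument is substantively identical.
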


\begin{proof}
    We proceed by induction on the histogram length of the entries: we will prove that an entry with a histogram of length $\ell$ has error at most $\epsilon (5n^{3/2})^{m - \ell}$.

    In the base case of $\ell = m$, we have that by assumption, every entry of $E$ is bounded by $\epsilon$.

    Now, for the inductive step, consider an entry $X$ with a histogram of length $\ell \le m - 1$.
    In filling in the entry $T(X)$, we only use information from entries with shorter histograms, which by the inductive hypothesis each have error at most $\alpha = \epsilon (5n^{3/2})^{m-\ell-1}$.
    Summing over the squared errors of the individual entries, the squared Frobenius norm error of
    the known entries in the slice in which $T(X)$ was completed, pre-completion is at most $n^2\alpha^2$.
    Due to the assumptions on $k,\mu,m,n$, by \pref{thm:noisy_matrix_completion}, matrix completion amplifies the Frobenius norm error of $\beta$ to at most a Frobenius norm error of $5\beta \cdot n^{1/2}$.
    Thus, we have that the Frobenius norm of the slice $T(X)$ was completed in, post-completion, is at most $5 n^{3/2} \alpha$, and therefore that the error in the entry $T(X)$ is as most $\epsilon \cdot (5n^{3/2})^{m-\ell}$, as desired.

    This concludes the induction.
    Finally, as our error bound is per entry, it is not increased by the symmetrization in \pref{step:symmetrize}.
    Any slice has at most one entry with a histogram of length one, $2n-2$ entries with a histogram of length two, and $n^2-(2n-1)$ entries with a histogram of length three. Thus the total error in a slice is at most $4\cdot \epsilon \cdot (5n^{3/2})^{m-1}$, and there are $n^{m-2}$ slices.
\end{proof}

\section{Empirical Moment Estimation for Learning Product Mixtures}\label{app:error}

In \pref{sec:pdist}, we detailed our algorithm for learning mixtures of product distributions while assuming access to exact moments of the distribution $\cD$.
Here, we will give an analysis which accounts for the errors introduced by empirical moment estimation.
We note that we made no effort to optimize the sample complexity, and that a tighter analysis of the error propagation may well be possible.

\smallskip

\fbox{
\begin{minipage}{0.9\textwidth}
    \begin{algorithm}[Learning product mixture over separated centers]
    \label{alg:learning_dep}

    {\bf Input:} $N$ independent samples $x_1,\ldots, x_N$ from $\cD$, where $\cD$ has bias vectors with separation $\eta > 0$.
    {\bf Goal:} Recover the bias vectors and mixing weights of $\cD$.
\smallskip

	Let $m$ be the smallest odd integer for which $v_1^{\tensor m},\ldots,v_k^{\tensor m}$ become linearly independent.
    \begin{enumerate}
	\item Empirically estimate $\cM_{2m}^{\cD}$ and $\cM_{3m}^{\cD}$ by calculation $M := \frac{1}{N}\sum_{i \in [N]} (x_i^{\tensor m})(x_i^{\tensor m})^\top$ and $T := \frac{1}{N} \sum_{i\in[N]} (x_i^{\tensor m})^{\tensor 3}$.
	    \label{step:estimation}
	\item Run \pref{alg:recovery} on $M$ and $T$.
	    \label{step:nnm-big-err}
    \end{enumerate}
    {\bf Output:} The approximate mixing weights $\hat w_1,\ldots, \hat w_k$, and the approximate vectors $\hat v_1,\ldots, \hat v_k$.
\end{algorithm}
\end{minipage}}
\bigskip

\bigskip
\begin{theorem}[\pref{thm:main_learn_prod} with empirical moment estimation]
    \label{thm:learn-big}
    Let $\cD$ be a product mixture over $k$ centers with bias vectors $v_1,\ldots,v_k \in [-1,1]^{n}$ and mixing weights $w_1,\ldots,w_k>0$.
    Let $m$ be the smallest odd integer for which $v_1^{\tensor m},\ldots,v_k^{\tensor m}$ are linearly independent (if $v_1,\ldots,v_k$ are $\eta$-separated for $\eta > 0$, then $m \le \log_{\frac{1}{1-\eta}} k$).
Define $M_{2m} = \sum_{i\in[k]} v_i^{\tensor m} (v_i^{\tensor m})^\top$.
    Suppose
        \[
	    4\cdot m \cdot r \cdot \mu \le n,
        \]
    where $\mu$ and $r$ are the incoherence and dimension of the space $\Span\{v_i\}$ respectively.
    Furthermore, let $\beta \leq\min\left( O(1/k\sqrt{w_{\max}}),\tfrac{1}{40}\right)$ be suitably small, and let the parameter $N$ in \pref{alg:learning_dep} satisfy $N \geq \frac{2}{\epsilon^2}(4\log n + \log \frac{1}{\delta})$ for $\epsilon$ satisfying
    \[
    \epsilon \leq \frac{\beta \cdot \sigma_k(M_{2m})}{4\cdot (5n^{3/2})^{3m-2}}\min\left(\frac{1}{6\sqrt{w_{\max}}},\frac{\sigma_k(M)^{1/2}}{(5n^{3/2})^{3m/2}}\right)
    \]

    Finally, pick any $\eta \in (0,1)$.
    Then with probability at least $1-\delta-\eta$, \pref{alg:learning_dep} returns vectors $\hat v_1,\dots, \hat v_k$ and mixing weights $\hat w_1,\dots,\hat w_k$ such that
    \[
	\|\hat v_i - v_i\| \le \sqrt{n} \cdot \left(10\cdot\beta + 60\cdot\beta\cdot\|M_{2m}\|^{1/2} + \frac{\beta \cdot \sigma_k(M_{2m})}{6\sqrt{w_{\max}}}\right)^{1/m}, \quad \text{and} \quad |\hat w_i - w_i| \le 40 \beta,
    \]
    and runs in time $n^{O(m)}\cdot O(N \cdot \poly(k)\log(1/\eta) \cdot(\log k + \log\log (\tfrac{w_{\max}}{\epsilon})))$.
    In particular, a choice of $N \ge n^{\tilde O(m)}$ gives sub-constant error, where the tilde hides the dependence on $w_{\min}$ and $\sigma_{k}(M_{2m})$.
\end{theorem}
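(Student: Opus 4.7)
The proof follows the pipeline of \pref{alg:learning_dep}, tracking error at each stage. First, I would apply Hoeffding's inequality to each entry of the empirical moment tensors $M$ and $T$: each entry is an average of $N$ independent $[-1,1]$-valued random variables, so it deviates by more than $\epsilon$ with probability at most $2\exp(-N\epsilon^2/2)$. A union bound over the at most $n^{3m}$ entries, together with the chosen value of $N$, ensures that with probability at least $1-\delta$ we may write $M = \cM_{2m}^{\cD} + E_2$ and $T = \cM_{3m}^{\cD} + E_3$ with $\|E_2\|_\infty,\|E_3\|_\infty \leq \epsilon$.

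Second, I would invoke the noisy tensor completion guarantee of \pref{thm:completion-err} on $M$ and $T$. Because $\cT_{2m}^{\cD}$ and $\cT_{3m}^{\cD}$ coincide with $\cM_{2m}^{\cD}$ and $\cM_{3m}^{\cD}$ on multilinear entries, and because the hypothesis $4 m r \mu \leq n$ is precisely the condition required by \pref{thm:tensor-complete-alg}, noisy completion returns approximations whose Frobenius-norm distance to the flattened augmented tensors $M_{2m}$ and $\cT_{3m}^{\cD}$ is bounded by $O(\epsilon)\cdot(5n^{3/2})^{O(m)}$. The restrictive hypothesis on $\epsilon$ in the statement is calibrated precisely so that after this polynomial-in-$n$ amplification the residual error falls below the threshold needed by the subsequent whitening step.

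Third, I would apply the whitening-plus-tensor-power-iteration algorithm of \pref{thm:whitening} (using the full perturbation analysis of \cite{AGHkT14}) to the flattened pair. The vectors $v_1^{\tensor m},\ldots,v_k^{\tensor m}$ are linearly independent by the choice of $m$ and \pref{lem:tensor-powers}, so the algorithm applies. The perturbation analysis translates the Frobenius-norm errors from the previous step into $O(\beta)$ additive error on each mixing weight and $O(\beta\cdot\|M_{2m}\|^{1/2})$ additive error on each $v_i^{\tensor m}$ (up to permutation), provided the input error is below a threshold that scales inversely with $\sigma_k(M_{2m})$ and with $\sqrt{w_{\max}}$. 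This dependence is exactly what produces the factors $\sigma_k(M_{2m})$, $\sqrt{w_{\max}}$, and $\|M_{2m}\|^{1/2}$ in both the hypothesis on $\epsilon$ and the conclusion of the theorem.

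Finally, I would recover each bias vector $v_i$ by extracting the $m$-th root of the appropriate entry of the approximate $v_i^{\tensor m}$; since $m$ is odd, the sign is determined uniquely. An entry-wise error of $\alpha$ in $v_i^{\tensor m}$ induces an $O(\alpha^{1/m})$ error per coordinate of $v_i$, and summing in squares over $n$ coordinates produces the $\sqrt{n}$ factor in the stated bound. The main obstacle is the third step: carrying out the tensor power iteration perturbation analysis with the sharp quantitative dependence on $\sigma_k(M_{2m})$, $\|M_{2m}\|$, and $w_{\max}$, rather than losing factors that would inflate the sample complexity. The final sample bound $N\ge n^{\tilde O(m)}$ then emerges from inverting the $(5n^{3/2})^{O(m)}$ blow-up incurred during tensor completion, with the tilde hiding logarithmic dependence on $w_{\min}$ and $\sigma_k(M_{2m})$; the runtime bound is the sum of $O(Nn^{3m})$ for empirical estimation, $\tilde O(n^{3m+1})$ for tensor completion, and the $\poly(n^m,k)\log\log(w_{\max}/\epsilon)$ cost of whitening and power iteration.
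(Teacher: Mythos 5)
Your proposal tracks the paper's proof almost step for step: Hoeffding plus a union bound to get entrywise $\epsilon$-accurate empirical moments, \prettyref{thm:completion-err} to propagate that error through multilinear tensor completion with the $(5n^{3/2})^{O(m)}$ amplification, the robust whitening/tensor-power-iteration guarantee (\prettyref{thm:whitening-err}) to recover $w_i$ and $v_i^{\tensor m}$ with $O(\beta)$ and $O(\beta\|M_{2m}\|^{1/2})$ errors, and finally coordinatewise $m$-th roots (using that $m$ is odd) with the $\sqrt{n}$ aggregation factor. This is the same decomposition and the same sequence of lemmas as the paper, so there is nothing substantive to flag.
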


Before proving \pref{thm:learn-big}, we will state  state the guarantees of the whitening algorithm of \cite{AGHkT14} on noisy inputs, which is used as a black box in \pref{alg:recovery}.
We have somewhat modified the statement in \cite{AGHkT14} for convenience;
for a breif account of their algorithm, as well as an account of our modifications to the results as stated in \cite{AGHkT14}, we refer the reader to \pref{app:whitening}.
\begin{theorem}[Corollary of Theorem 4.3 in \cite{AGHkT14}]
    \label{thm:whitening-err}
    Let $v_1,\ldots,v_k\in[-1,1]^n$ be vectors and let $w_1,\ldots,w_k>0$ be weights.
    Define
$M = \sum_{i\in[k]} w_i \cdot v_i v_i^T$ and $T = \sum_{i\in[k]} w_i \cdot v_i^{\tensor 3}$, and
    suppose we are given  $\hat M = M + E_M$ and $\hat T = T + E_T$, where $E_M\in \R^{n \times n}$ and $E_T \in \R^{n\times n\times n}$ are symmetric  error terms such that
    \[
	2\beta \coloneqq \frac{6\|E_M\|_F\sqrt{w_{\max}}}{\sigma_k(M)} + \frac{\sqrt{k}\|E_T\|_F}{\sigma_k(M)^{3/2}} < O\Paren{\frac{1}{\sqrt{w_{\max}}\cdot k}}.
\]
    Then there is an algorithm that recovers vectors $\hat v_1,\ldots,\hat v_k$ and weights $\hat w_1,\ldots,\hat w_k$ such that for all $i\in[n]$,
    \[
	\| v_i - \hat v_i\| \le  \|E_M\|^{1/2} + 60 \|M\|^{1/2} \beta + 10 \beta, \quad \text{and} \quad \left|w_i - \hat w_i \right| \le 40 \beta,
    \]
    with probability $1-\eta$ in time $O(L \cdot k^3\cdot(\log k + \log\log (\tfrac{1}{\sqrt{w_{\max}}\cdot\epsilon})))$, where $L$ is $\poly(k)\log(1/\eta)$.
\end{theorem}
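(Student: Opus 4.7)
The plan is to derive \pref{thm:whitening-err} by instantiating Theorem 4.3 of \cite{AGHkT14} with appropriate perturbation bounds, then translating the guarantees from the ``whitened'' coordinate system back to the original basis. The key observation is that the recipe in \cite{AGHkT14} proceeds in three stages: (i) compute a whitening matrix $W \in \R^{n \times k}$ from $\hat M$ such that $W^\top \hat M W \approx I_k$; (ii) multilinearly apply $W$ to the three modes of $\hat T$ to obtain a symmetric $k \times k \times k$ tensor $\hat T(W,W,W)$ which, up to perturbation, is orthogonally decomposable; (iii) run the robust tensor power method on $\hat T(W,W,W)$ to extract pairs $(\hat\mu_i, \hat a_i)$, then invert the whitening to get $\hat v_i = \hat \mu_i (W^+)^\top \hat a_i$ and $\hat w_i = 1/\hat\mu_i^2$. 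What must be shown is that the combined hypothesis on $\|E_M\|_F$ and $\|E_T\|_F$ implies the hypotheses required by the robust power iteration theorem, and that the back-transformation does not amplify errors by more than the stated factors.

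First I would pin down the whitening step. Writing $M = A \diag(w) A^\top$ with $A = [v_1 \mid \cdots \mid v_k]$, any $W$ satisfying $W^\top M W = I_k$ gives $W^\top A \diag(w)^{1/2}$ orthogonal, and then $T(W,W,W) = \sum_i \tfrac{1}{\sqrt{w_i}} (W^\top v_i \sqrt{w_i})^{\tensor 3}$ is an orthogonal decomposition with eigenvalues $\mu_i = 1/\sqrt{w_i}$ and unit eigenvectors $a_i = W^\top v_i \sqrt{w_i}$. The standard construction takes $W = U_k \Sigma_k^{-1/2}$ from the top-$k$ SVD of $\hat M$, and a Weyl/Davis--Kahan style perturbation argument shows $\|W - W^\star\|_{\mathrm{op}} = O(\|E_M\|_F / \sigma_k(M)^{3/2})$ for some ideal $W^\star$, while $\|W\|_{\mathrm{op}} \le O(1/\sigma_k(M)^{1/2})$. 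Propagating this bound into $\hat T(W,W,W)$, the perturbation from the true orthogonally-decomposable tensor in Frobenius (equivalently, injective) norm is at most
\[
  3 \|T\|_{\mathrm{op}} \|W\|^2 \|W - W^\star\| + \|E_T\| \|W\|^3 \;\lesssim\; \frac{\|E_M\|_F \sqrt{w_{\max}}}{\sigma_k(M)} + \frac{\|E_T\|_F}{\sigma_k(M)^{3/2}},
\]
using $\|T\|_{\mathrm{op}} \le \sqrt{w_{\max}} \|M\|$ and $\|M\|/\sigma_k(M)^{1/2}$ cancellation; this quantity is exactly $2\beta$ up to constants, so the hypothesis $2\beta = O(1/(\sqrt{w_{\max}} k))$ in the theorem statement forces the perturbation to lie below the $1/k$-scale threshold demanded by the robust tensor power method.

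Second I would invoke the robust tensor power method guarantee as stated in Theorem 5.1 of \cite{AGHkT14}: when the perturbation to an orthogonally decomposable $k$-dimensional third-order tensor is bounded by $\epsilon' = O(1/k) \cdot \mu_{\min}$ (in our normalization $\mu_{\min} = 1/\sqrt{w_{\max}}$), running $L = \poly(k)\log(1/\eta)$ random restarts and $O(\log k + \log\log(1/\epsilon'))$ power iterations per restart recovers all pairs $(\hat\mu_i, \hat a_i)$ with $|\hat\mu_i - \mu_i| \le 5\epsilon'$ and $\|\hat a_i - a_i\| \le 8\epsilon'/\mu_i$ with probability $\ge 1-\eta$. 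This yields the weight bound $|\hat w_i - w_i| = |1/\hat\mu_i^2 - 1/\mu_i^2| \le 40\beta$ directly after bookkeeping with $w_i \le w_{\max}$.

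Finally I would back out the vector error. Since $v_i = \mu_i (W^{+})^\top a_i$ under the ideal whitening, and since $W^+ = \Sigma_k^{1/2} U_k^\top$ has operator norm bounded by $\|M\|^{1/2} + O(\|E_M\|_F^{1/2})$, a triangle inequality decomposes
\[
  \|v_i - \hat v_i\| \;\le\; \|(W^{\star +})^\top - (W^+)^\top\| \cdot \|\mu_i a_i\| \;+\; \|W^+\| \cdot \|\mu_i a_i - \hat\mu_i \hat a_i\|,
\]
and the first term contributes $\|E_M\|^{1/2}$ via pseudoinverse perturbation while the second term contributes $60 \|M\|^{1/2}\beta + 10\beta$ from the power-method error, matching the claimed bound. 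The main obstacle is the bookkeeping in this last step: the perturbation of the pseudoinverse of a low-rank matrix is sensitive to the gap $\sigma_k(M)$, and one must carefully combine the Wedin-style bound with the normalizations chosen for $W$ so that the two terms collapse into the clean form $\|E_M\|^{1/2} + 60\|M\|^{1/2}\beta + 10\beta$; the stated hypothesis $2\beta \ll 1/(k\sqrt{w_{\max}})$ is precisely what keeps all the cross terms subdominant. The runtime bound follows directly from the power-iteration cost in Theorem 5.1 of \cite{AGHkT14}.
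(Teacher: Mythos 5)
Your overall route is the same as the paper's: whiten with the empirical $\hat M$, apply the whitening map to $\hat T$, feed the resulting near-orthogonally-decomposable $k\times k\times k$ tensor to the robust tensor power method (Theorem 5.1 of \cite{AGHkT14}, restated as \pref{thm:tpi}), then un-whiten with the pseudoinverse square root and propagate errors, recovering the weights from the inverse-square eigenvalues exactly as you describe. The un-whitening analysis you sketch (triangle inequality plus a Wedin-type bound on the pseudoinverse, yielding the $\|E_M\|^{1/2}$ term) and the Taylor-expansion bookkeeping for $|\hat w_i - w_i|\le 40\beta$ also match the paper's proof.

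The gap is in the one step where all the content lives: bounding the perturbation of the whitened tensor by $2\beta$. Your chain $3\|T\|_{\mathrm{op}}\|W\|^2\|W-W^\star\| + \|E_T\|\|W\|^3$, with $\|W\|\le O(\sigma_k(M)^{-1/2})$ and $\|W-W^\star\|\le O(\|E_M\|_F/\sigma_k(M)^{3/2})$, gives for the first term something of order $\|T\|_{\mathrm{op}}\,\|E_M\|_F/\sigma_k(M)^{5/2}$; even granting your (unjustified) estimate $\|T\|_{\mathrm{op}}\le \sqrt{w_{\max}}\|M\|$, this exceeds the target $\sqrt{w_{\max}}\|E_M\|_F/\sigma_k(M)$ by a factor of roughly $\|M\|/\sigma_k(M)^{3/2}$, i.e.\ a condition-number dependence that the theorem's hypothesis on $2\beta$ does not contain, and the invoked ``cancellation'' is never exhibited. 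There is also the subtlety that ``$W-W^\star$ for some ideal $W^\star$'' must be made meaningful despite the rotational ambiguity of whitening maps and possibly tiny gaps among the top $k$ eigenvalues (this is why one works with the symmetric square root of the pseudoinverse). The paper does not attempt a term-by-term bound at all: it invokes \pref{prop:white-err}, a consequence of Lemma 12 of \cite{HK13}, which bounds $\|T(U,U,U)-\hat T(\hat U,\hat U,\hat U)\|$ by $\frac{6}{\sqrt{\lambda_{\min}}}\epsilon + \|E_T\|_F\|\hat U\|^2\|\hat U\|_F$ directly; the proof of that lemma exploits cancellations (comparing $\hat U^\top \hat M \hat U$ with the identity rather than multiplying $\|T\|_{\mathrm{op}}$ against a whitening-matrix difference) that your decomposition discards. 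So either cite that lemma, as the paper does, or redo its bookkeeping; as written, your step (i)$\to$(ii) does not establish the stated hypothesis of the power-method theorem under the theorem's assumptions.
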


Having stated the guarantees of the whitening algorithm, we are ready to prove \pref{thm:learn-big}.
\begin{proof}[Proof of \pref{thm:learn-big}]
    We account for the noise amplification in each step.

\smallskip

    {\bf Step \ref{step:estimation}:}
    In this step, we empirically estimate the multilinear moments of the distribution.
    We will apply concentration inequalities on each entry individually.
    By a Hoeffding bound, each entry concentrates within $\epsilon$ of its expectation with probability $1 - \exp(-\tfrac{1}{2} N \cdot \epsilon^2)$. Taking a union bound over the $\binom{n}{2m} + \binom{n}{3m}$ moments we must estimate, we conclude that with probability at least $1 - \exp(-\tfrac{1}{2}N \cdot \epsilon^2 + 4m \log n)$, all moments concentrate to within $\epsilon$ of their expectation.
    Setting $N = \tfrac{2}{\epsilon^2}(4m \log n + \log \tfrac{1}{\delta})$, we have that with probability $1-\delta$, every entry concentrates to within $\epsilon$ of its expectation.

    \bigskip
    Now, we run \pref{alg:recovery} on the estimated moments.
    \bigskip

    {\bf Step \ref{step:tensor-comp} of \pref{alg:recovery}:}
    Applying \pref{thm:completion-err}, we see that the error satisfies $\|E'_2\|_F \leq 4\cdot \epsilon \cdot (5n^{3/2})^{3m-2}$ and $\|E'_3\|_F \leq 4\cdot \epsilon \cdot (5n^{3/2})^{\frac{9}{2}m-2}$.

    \bigskip
    {\bf Step \ref{step:flattening} of \pref{alg:recovery}:} No error is introduced in this step.
    \bigskip

    {\bf Step \ref{step:whitening} of \pref{alg:recovery}:}
    Here, we apply \pref{thm:whitening-err} out of the box, where our vectors are now the $v_i^{\tensor m}$.
    The desired result now follows immediately for the estimated mixing weights, and for the estimated tensored vectors we have $\|u_i - v_i^{\tensor m} \| \le 10 \cdot\beta + 60\cdot \beta \|M\|^{1/2} + \|E_2'\|$, for $\beta$ as defined in \pref{thm:learn-big}.
    Note that $\|E_2'\| \leq \|E_2'\|_F \leq \beta\cdot \sigma_k(M_{2m})/6\sqrt{w_{\max}}$, so
    let $\gamma = 10 \cdot\beta + 60\cdot \beta \|M\|^{1/2} + \frac{\beta \cdot \sigma_k(M_{2m})}{6\sqrt{w_{\max}}}$.

    \bigskip

    {\bf Step \ref{step:roots} of \pref{alg:recovery}:}
    Let $u_i^*$ be the restriction of $u_i$ to the single-index entries, and let $v_i^*$ be the same restriction for $v_i^{\tensor m}$.
    The bound on the error of the $u_i$ applies to restrictions,
    so we have
    $
	\|v_i^* - u_i^*\| \le \gamma.
    $
    So the error in each entry is bounded by $\gamma$.
    By the concavity of the $m$th root, we thus have that $\|v_i - \hat v_i\| \le \sqrt{n} \cdot \gamma^{1/m}$.
    \bigskip

    To see that choosing $N \ge n^{\tilde O(m)}$ gives sub-constant error, calculations suffice; we only add that $\|M_{2m}\| \le r n^{m}$, where we have applied a bound on the Frobenius norm of $\|M_{2m}\|$.
    The tilde hides the dependence on $w_{\min}$ and $\sigma_{k}(M_{2m})$.
    This concludes the proof.
\end{proof}

\section{Recovering Distributions from Second- and Third-Order Tensors}
\label{app:whitening}

In this appendix, we give an account of the algorithm of \cite{AGHkT14} which, given access to estimates of $M_V^{\cD}$ and $T_V^{\cD}$, can recover the parameters of $\cD$.
We note that the technique is very similar to those of \cite{AHK12,HK13,AGHK14}, but we use the particular algorithm of \cite{AGHkT14}.
In previous sections, we have given a statement that follows from their results; here we will detail the connection.

In \cite{AGHkT14}, the authors show that for a family of distributions with parameters $v_1,\ldots,v_k \in \R^n$ and $w_1,\ldots,w_k > 0$, if the $v_1,\ldots, v_k$ are linearly independent and one has approximate access to $M_V := \sum_{i \in[k]} w_i v_i v_i^T$ and $T_V := \sum_{i\in[k]} w_i \cdot v_i ^{\tensor 3}$, then the parameters can be recovered.
For this, they use two algorithmic primitives: singular value decompositions and tensor power iteration.

Tensor power iteration is a generalization of the power iteration technique for finding matrix eigenvectors to the tensor setting (see e.g. \cite{AGHkT14}).
The generalization is not complete, and the convergence criteria for the method are quite delicate and not completely understood, although there has been much progress in this area of late (\cite{AGJ14,AGJ14b,GHJY15}).
However, it is well-known that when the input tensor $T \in \R^{n\times n \times n}$ is decomposable into $k < n$ symmetric orthogonal rank-1 tensors, i.e. $T = \sum_{i \in [k]} v_i^{\tensor 3}$ where $k < n$ and $\iprod{v_i,v_j} = 0$ for $i \neq j$, then it is possible to recover $v_1,\ldots, v_k$ using tensor power iteration.

The authors of \cite{AGHkT14} prove that this process is robust to some noising of $T$:
\begin{theorem}[Theorem 5.1 in \cite{AGHkT14}]
    \label{thm:tpi}
    Let $\tilde T = T + E \in \R^{k \times k \times k}$ be a symmetric tensor, where $T$ has the decomposition $T = \sum_{i\in[k]} \lambda_i \cdot u_i \tensor u_i \tensor u_i$ for orthonormal vectors $u_1,\cdots, u_k$ and $\lambda_1,\ldots, \lambda_k >0$, and $E$ is a tensor such that $\|E\|_F \le \beta$.
    Then there exist universal constants $C_1,C_2,C_3 > 0$ such that the following holds.
    Choose $\eta \in (0,1)$, and suppose
    \[
	\beta \le C_1 \cdot \frac{\lambda_{\min}}{k}
    \]
    and also
    \[
	\sqrt{\frac{\ln(L/\log_2(k/\eta))}{\ln k}}\cdot
	\left(1 - \frac{\ln(\ln(L/\log_2(k/\eta))) + C_3}{4\ln(L/\log_2(k/\eta))} - \sqrt{\frac{\ln 8}{\ln(L/\log_2(k/\eta))}}\right) \ge 1.02\left(1 + \sqrt{\frac{\ln 4}{\ln k}}\right).
    \]
    Then there is a tensor power iteration based algorithm that recovers vectors $\hat u_1,\ldots,\hat u_k$ and coefficients $\hat\lambda_1,\ldots,\hat\lambda_k$ with probability at least $1-\eta$ such that for all $i\in[n]$,
    \[
	\|\hat u_i - u_i\| \le \beta \frac{8}{\lambda_i}, \quad \text{and} \quad |\hat\lambda_i - \lambda_i| \le 5 \beta,
    \]
    in $O(L \cdot k^3\cdot(\log k + \log\log (\tfrac{\lambda_{\max}}{\beta})))$ time.
    The conditions are met when $L = \poly(k)\log(1/\eta)$.
\end{theorem}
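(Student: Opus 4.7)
The plan is to prove this by analyzing the robust tensor power method: first establishing fast convergence in the noiseless orthogonal case, then controlling how random initialization plus deflation extracts all $k$ components despite the perturbation $E$.

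First I would analyze the noiseless update map. Writing $v = \sum_i c_i u_i + v^\perp$ with $v^\perp$ orthogonal to $\Span\{u_i\}$, a direct computation gives that $T(I,v,v) = \sum_i \lambda_i c_i^2 u_i$ (the $v^\perp$ component vanishes by orthogonality since $T$ has no mass outside $\Span\{u_i\}$). Thus one power iteration step maps the coefficient vector $(c_1,\ldots,c_k)$ to $(\lambda_1 c_1^2,\ldots,\lambda_k c_k^2)$ up to normalization. This means the ratio $|c_i/c_j|$ squares at each step, so once an initialization has strictly largest weighted coordinate $\lambda_i c_i^2$, iteration converges quadratically to $u_i$. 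The standard calculation shows $O(\log\log(1/\epsilon) + \log k)$ iterations suffice to reach error $\epsilon$.

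Next I would handle initialization. For a random Gaussian (or unit-sphere) initialization, the coefficients $c_i$ behave like i.i.d. Gaussians, and a standard anti-concentration argument shows that with probability $\Omega(1/\poly(k))$ some coordinate is a constant factor larger (in weighted norm) than all others, landing inside the basin of a particular $u_i$. Running $L = \poly(k)\log(1/\eta)$ independent trials and selecting, after a few power iterations, the trial whose $\lambda_i c_i^2$ is largest, ensures that with probability at least $1-\eta/k$ we find a good starting point for \emph{some} $u_i$; a union bound and the $\log_2(k/\eta)$ factor in the displayed inequality account for recovery of all $k$ components. This is where the somewhat ugly analytic condition on $L$ in the theorem statement comes from, and it is essentially the hardest part of the proof to get the exact constants right.

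Then I would incorporate noise. Writing $\tilde T(I,v,v) = T(I,v,v) + E(I,v,v)$, since $\|E(I,v,v)\| \le \|E\|_{F}\cdot\|v\|^2 \le \beta$, the iteration map is only perturbed by $O(\beta)$ in each step. The quadratic convergence rate dominates this additive perturbation provided $\beta \le C_1 \lmin/k$, and one shows the final extracted vector $\hat u_i$ satisfies $\|\hat u_i - u_i\| \le O(\beta/\lambda_i)$. Estimating the eigenvalue by $\hat\lambda_i = \tilde T(\hat u_i,\hat u_i,\hat u_i)$ gives $|\hat\lambda_i - \lambda_i| = O(\beta)$.

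Finally I would handle deflation. After extracting $(\hat u_1,\hat \lambda_1)$, replace $\tilde T$ by $\tilde T - \hat\lambda_1 \hat u_1^{\otimes 3}$ and recurse. The key invariant is that the residual tensor remains close to $\sum_{i \ge 2}\lambda_i u_i^{\otimes 3}$ with error at most (say) $2\beta$, so the same analysis applies to the remaining $k-1$ components. Because the condition on $\beta$ has a factor of $1/k$, deflation error accumulates additively across $k$ stages but stays within the allowed budget, yielding the stated per-component error bounds. The runtime $O(Lk^3(\log k + \log\log(\lmax/\beta)))$ comes from $L$ restarts, each doing $O(\log k + \log\log(1/\epsilon))$ power iterations, each of which is a tensor contraction of cost $O(k^3)$.

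The main obstacle is the initialization step: carefully showing that $\poly(k)\log(1/\eta)$ random restarts suffice and that one can reliably identify which restart has landed in a good basin, all while respecting the noise budget. Once this is in place, the convergence, perturbation, and deflation arguments are relatively routine.
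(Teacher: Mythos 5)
This theorem is not proved in the paper: it is stated verbatim as a citation of Theorem~5.1 in \cite{AGHkT14} and used as a black box, so there is no internal argument to compare your attempt against. That said, your sketch does track the overall structure of the proof that \cite{AGHkT14} actually gives --- noiseless quadratic convergence of the power map, random restarts with a post-iteration selection rule, an $O(\beta)$ per-step perturbation via $\|E(I,v,v)\| \le \|E\|$, and deflation --- so it is the right outline of the cited argument.

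Two cautions on the details. First, the quantity governing convergence is $\lambda_i |c_i|$, not $\lambda_i c_i^2$: since $c_i \mapsto \lambda_i c_i^2$ up to normalization, it is $\lambda_i c_i \mapsto (\lambda_i c_i)^2$ up to normalization, so the ratios $\lambda_i|c_i|/\lambda_j|c_j|$ are what square each step and determine the basin of attraction. Second, and more substantively, the deflation step you call ``relatively routine'' is in fact where much of the technical work in \cite{AGHkT14} lives. A naive bound $\|\lambda_i u_i^{\otimes 3} - \hat\lambda_i \hat u_i^{\otimes 3}\| = O(\beta)$ per deflated term would make the effective noise at stage $j$ grow like $j\beta$, and would yield errors of order $j\beta/\lambda_j$, not the uniform $8\beta/\lambda_i$ claimed. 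The reference avoids this by proving that the accumulated deflation residual is structured --- nearly orthogonal to the as-yet-unrecovered directions --- so that its contraction with vectors near a remaining $u_i$ stays $O(\beta)$ rather than $O(j\beta)$. Only that finer structural argument gives the stated per-component bounds, and it is at least as delicate as the initialization analysis you flag as the main obstacle.
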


The idea is then to take the matrix $M_V$, and apply a \emph{whitening} map $W = (M_V^{\dagger})^{1/2}$ to orthogonalize the vectors.
Because $v_1,\ldots,v_k$ are assumed to be linearly independent, and because
$W M W = \sum_{i \in [k]} w_i (Wv_i) (Wv_i)^T= \text{Id}_k$, it follows that the $\sqrt{w_i}\cdot Wv_i$ are orthogonal vectors.
Now, applying the map $W \in \R^{k \times n}$ to every slice of $T$ in every direction, we obtain a new tensor $T_W = \sum_{i \in[k]} w_i (Wv_i)^{ \tensor 3}$, by computing each entry:
\[
    T(W,W,W)_{a,b,c} \coloneqq T_W(a,b,c) = \sum_{1\le a',b',c' \le n} W^T(a',a)\cdot W^T(b',b)\cdot W^T(c',c)\cdot T(a',b',c').
\]
From here on out we will use $T(A,A,A)$ to denote this operation on tensors.
The tensor $T_W$ thus has an orthogonal decomposition.
Letting $u_i = \sqrt{w_i} W v_i$, we have that $T = \sum_{i\in[k]} \tfrac{1}{\sqrt{w_i}} \cdot u_i^{\tensor 3}$.
Applying tensor power iteration allows the recovery of the $u_i = \sqrt{w_i} \cdot Wv_i$ and the weights $\tfrac{1}{\sqrt{w_i}}$, from which the $v_i$ are recoverable.

The theorem \pref{thm:whitening-err} is actually the consequence of \pref{thm:tpi} and the following proposition, which controls the error propagation in the whitening step.
\begin{proposition}[Consequence of Lemma 12 of \cite{HK13}]
    \label{prop:white-err}
    Let $M_2 = \sum_{i\in[k]} \lambda_i\cdot u_i u_i^T$ be a rank-$k$ PSD matrix, and let $\hat M$ be a symmetric matrix whose top $k$ eigenvalues are positive.
    Let $T = \sum_{i\in[k]} \lambda_i\cdot u_i^{\tensor 3}$, and let $\hat T = T + E_T$ where $E_T$ is a symmetric tensor with $\|E_T\|_F \le \gamma$.

    Suppose $\|M_2 - \hat M\|_F \le \epsilon \sigma_k(M_2)$, where $\sigma_k(M)$ is the $k$th eigenvalue of $M_2$.
    Let $U$ be the square root of the pseudoinverse of $M_2$, and let $\hat U$ be the square root of the pseudoinverse of the projection of $\hat M$ to its top $k$ eigenvectors.
    Then
    \[
	\|T(U, U, U) - \hat T(\hat U,\hat U,\hat U)\| \le \frac{6}{\sqrt{\lambda_{\min}}} \epsilon + \gamma \cdot \|\hat U\|^2\|\hat U\|_F
    \]
\end{proposition}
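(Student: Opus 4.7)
The plan is to split the quantity of interest by the triangle inequality into a ``noise'' contribution and a ``whitening perturbation'' contribution:
\begin{align*}
\bignorm{T(U,U,U) - \hat T(\hat U,\hat U,\hat U)}
&\le \bignorm{T(U,U,U) - T(\hat U,\hat U,\hat U)} \\
&\quad + \bignorm{T(\hat U,\hat U,\hat U) - \hat T(\hat U,\hat U,\hat U)}.
\end{align*}
The second term equals $\|E_T(\hat U,\hat U,\hat U)\|$, and since $E_T$ is applied in three modes by the symmetric matrix $\hat U$, the standard trilinear bound gives $\|E_T(\hat U,\hat U,\hat U)\|\le \|E_T\|_F\cdot\|\hat U\|^2\cdot\|\hat U\|_F\le \gamma\|\hat U\|^2\|\hat U\|_F$, which matches the second term of the target inequality.

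For the first term, I would carry out a multilinear hybrid argument. Writing $\Delta = U-\hat U$ and using multilinearity of $T$,
\[
T(U,U,U)-T(\hat U,\hat U,\hat U) = T(\Delta,U,U) + T(\hat U,\Delta,U) + T(\hat U,\hat U,\Delta).
\]
For each summand I would use the decomposition $T=\sum_i \lambda_i u_i^{\tensor 3}$ and the key \emph{whitening identity} $U M_2 U = P$, where $P$ is the projection onto $\Span\{u_i\}$, which implies that $\{\sqrt{\lambda_i}\,Uu_i\}_{i\in[k]}$ is an orthonormal system and in particular $\|Uu_i\|=1/\sqrt{\lambda_i}$. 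This lets me bound each hybrid term by roughly $\max_i\sqrt{\lambda_i}\cdot\|(U-\hat U)u_i\|\cdot\|Uu_i\|\cdot\|Uu_i\|\le \|U-\hat U\|_{\text{op}}/\sqrt{\lambda_{\min}}$ when $\hat U$ is similarly close to being an isometry on $\Span\{u_i\}$, and summing three such terms contributes the factor $6$.

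Thus the core technical step is the perturbation bound $\|U-\hat U\|_{\text{op}}\le 2\epsilon$ (or equivalent), which I would prove in three substeps. First, apply Weyl's inequality to $\|M_2-\hat M\|_F\le \epsilon\sigma_k(M_2)$ to conclude that the top $k$ eigenvalues of $\hat M$ are within $\epsilon\sigma_k(M_2)$ of the nonzero eigenvalues of $M_2$, so in particular the $k$th eigenvalue of $\hat M$ is at least $(1-\epsilon)\sigma_k(M_2)$. Second, apply the Davis--Kahan theorem to bound the sine of the principal angle between the top-$k$ eigenspace of $\hat M$ and $\Span\{u_i\}$ by $O(\epsilon)$. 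Third, combine these with the stability of the matrix square root on the well-conditioned top-$k$ restriction of $\hat M$ to obtain $\|U-\hat U\|_{\text{op}}\le O(\epsilon/\sqrt{\sigma_k(M_2)}) = O(\epsilon/\sqrt{\lambda_{\min}})$, which when fed into the hybrid bound yields exactly the $6\epsilon/\sqrt{\lambda_{\min}}$ term.

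The main obstacle will be cleanly handling the square root of the pseudoinverse: both $U$ and $\hat U$ are defined on potentially different $k$-dimensional subspaces of $\R^n$, so one must be careful to carry out the comparison on a common subspace (for instance by inserting the projector onto $\Span\{u_i\}$ and controlling the ``cross'' error via the Davis--Kahan angle) before invoking any scalar square-root stability estimate. Once that bookkeeping is set up, the constants work out because the hybrid contributes three terms and the square-root-of-pseudoinverse step contributes a factor of $2$, giving the stated constant $6$. The quoted Lemma~12 of \cite{HK13} encapsulates exactly this perturbation argument, and I would cite it at the point of the final combination rather than redo the linear-algebraic bookkeeping from scratch.
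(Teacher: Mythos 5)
Your proposal is correct and takes essentially the same route as the paper: split by the triangle inequality into a whitening-perturbation term controlled by Lemma~12 of \cite{HK13} and a noise term $\|E_T(\hat U,\hat U,\hat U)\|$, then bound the latter by $\|E_T\|_F\|\hat U\|^2\|\hat U\|_F$. The only cosmetic differences are that the paper proves the trilinear noise bound explicitly via a slice-by-slice submultiplicativity and Cauchy--Schwarz computation rather than asserting it as standard, and cites Lemma~12 immediately for the combined inequality $\|T(U,U,U)-\hat T(\hat U,\hat U,\hat U)\|\le\tfrac{6\epsilon}{\sqrt{\lambda_{\min}}}+\|E_T(\hat U,\hat U,\hat U)\|$ rather than first sketching the hybrid/Weyl/Davis--Kahan argument underlying it.
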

\begin{proof}
    We use the following fact, which is given as Lemma 12 in \cite{HK13}.
    \[
	\|T(U, U, U) - \hat T(\hat U,\hat U,\hat U)\| \le \frac{6}{\sqrt{\lambda_{\min}}} \epsilon + \|E_T(\hat U,\hat U, \hat U)\|_2.
    \]
The proof of this fact is straightforward, but requires a great deal of bookkeeping; we refer the reader to \cite{HK13}.

It remains to bound $\|E_T(\hat U,\hat U, \hat U)\|_2$. Some straightforward calculations yield the desired bound,
    \begin{align*}
	\|E(\hat U, \hat U, \hat U)\|_2
	&\le \sum_{i} \|(\hat U e_i) \tensor \hat U^T E_i \hat U\|
	\le \sum_{i} \|\hat U e_i\|_2 \| \hat U^T E_i \hat U\|\\
	&\le \|\hat U\|^2 \cdot \sum_{i} \|\hat U e_i\|_2 \| E_i \|
	\le \|\hat U\|^2 \cdot \sum_{i} \|\hat U e_i\|_2 \| E_i \|_F\\
	&\le \|\hat U\|^2 \cdot \sqrt{\sum_{i} \|\hat U e_i\|^2_2}\sqrt{\sum_i \| E_i \|^2_F}
	\le \|\hat U\|^2\cdot \|\hat U\|_F\cdot \|E\|_F,
    \end{align*}
    where we have applied the triangle inequality, the behavior of the spectral norm under tensoring, the submultiplicativity of the norm, and Cauchy-Schwarz.
\end{proof}

We now prove \prettyref{thm:whitening-err}.
\begin{proof}[Proof of \pref{thm:whitening-err}]
    Let $\hat U$ be the square root of the projection of $\hat M$ to its top $k$ eigenvectors.
 Note that $\|\hat U\| \leq \sigma_k(M_2)^{-1/2}$, $\|\hat U\|_F \leq \sqrt{k}\sigma_k(M_2)^{-1/2}$, and thus by \prettyref{prop:white-err}, the error $E$ in \prettyref{thm:tpi} satisfies
\[
2\beta \coloneqq \|E\|_F \leq \frac{6\|E_M\|_F}{\sigma_k(M_2) \sqrt{\lambda_{\min}}} + \frac{\|E_T\|_F\sqrt{k}}{\sigma_k(M_2)^{3/2}}.
\]

    Suppose $1/40 \ge 2\beta\ge \|E\|_F$.
    Applying \pref{prop:white-err}, we obtain vectors $u_1,\ldots,u_k$ and scaling factors $\lambda_1,\ldots,\lambda_k$ such that $\|u_i - \sqrt{w_i} \cdot M^{-1/2} v_i\| \le 16 \cdot \beta \cdot \sqrt{w_i}$ and $| \frac{1}{\sqrt{w_i}} - \lambda_i | \le 5\cdot \beta$.
    The $w_i$ are now recovered by taking the inverse square of the $\lambda_i$, so we have that when $10\beta < \tfrac{1}{4} \le \tfrac{1}{4} \lambda_i$,
    \[
	\left|\hat w_i -w_i\right|
	= \left|\frac{1}{\lambda_i^2} - w_i\right|
	\le \left|\frac{1}{\lambda_i^2} - \frac{1}{(\lambda_i \pm 10\beta)^2}\right|
	\le 5\beta \cdot \frac{2 \lambda_i - 10\beta}{\lambda_i^2(\lambda_i - 10\beta)^2} \le 40 \beta,
    \]
    where to obtain the second inequality we have taken a Taylor expansion, and in the final inequality we have used the fact that $10\beta < \tfrac{1}{4} \lambda_i$.

    We now recover $v_i$ by taking $\hat v_i = \lambda_i\cdot \hat U u_i$, so we have
    \begin{align*}
	\|\hat v_i - v_i\|
	&\le \|\lambda_i\cdot \hat U \sqrt{w_i}\cdot M^{-1/2} v_i  - v_i\|
	+\|\lambda_i\cdot \hat U (u_i - \sqrt{w_i} M^{-1/2} v_i)\|\\
	& \le (\lambda_i \cdot \sqrt{w_i}) \| (\hat U\cdot M^{-1/2}-I)v_i\| + (1-\lambda_i\sqrt{w_i}) \|v_i\|
	+\|\hat U\| \cdot 16 \beta \lambda_i \sqrt{w_i} \\
	& \le (1+10\beta) \| \hat U\cdot M^{-1/2} - I\| + 10\beta
	+\|\hat U\| \cdot 16\beta(1+10\beta) \\
	& \le (1+10\beta) \| \hat U\cdot M^{-1/2} - I\| + 10\beta
	+ \|\hat U\| \cdot 16\beta(1+10\beta).
    \end{align*}
    It now suffices to bound $\|\hat U M^{-1/2} - I \|$, for which it in turn suffices to bound $\|M^{-1/2} \hat U\hat U M^{-1/2} - I\|$, since the eigenvalues of $AA^T$ are the square eigenvalues of $A$.
    Consider $\| (M^{-1/2}\Pi_k (M + E_M)\Pi_k) M^{-1/2} - I \|$, where $\Pi_k$ is the projector to the top $k$ eigenvectors of $M$.
    Because both matrices are PSD, finally this reduces to bounding $\| M - \Pi_k(M+ E_M) \Pi_k\|$.
    Since $M$ is rank $k$, we have that $\| M - \Pi_k(M+ E) \Pi_k\| = \sigma_{k+1}(E_M) \leq \|E_M\|$.

    Thus, taking loose bounds, we have
    \[
	\|v_i - \hat v_i\|
	\le \|E_M\|^{1/2} + 60 \beta \cdot \|M_2\|^{1/2} + 10\beta,
    \]
    as desired.
\end{proof}

\end{document}

%%% Local Variables:
%%% mode: latex
%%% TeX-master: t
%%% End: